\def\BibTeX{{\rm B\kern-.05em{\sc i\kern-.025em b}\kern-.08em
    T\kern-.1667em\lower.7ex\hbox{E}\kern-.125emX}}
\newtheorem{proposition}{Proposition}
\newtheorem{lemma}{Lemma}
\newtheorem{remark}{Remark}
\newenvironment{proof}{\hspace{0ex}\textsc{Proof}.\hspace{1ex}}{\hfill$\Box$\newline}
\newcommand{\Rmnum}[1]{\expandafter\@slowromancap\romannumeral #1@}
\begin{document}
\title{Stochastic Event-triggered Variational Bayesian Filtering}
\author{Xiaoxu Lv, Peihu Duan, Zhisheng Duan, Guanrong Chen, and Ling Shi
\thanks{
This work is supported by the National Natural Science Foundation of China under grants T2121002 and 6217300.
\emph{(Corresponding author: Zhisheng Duan.)}}

\thanks{
X. Lv, P. Duan and Z. Duan are with the State Key Laboratory for Turbulence and Complex Systems, Department of Mechanics and Engineering Science, College of Engineering, Peking University, Beijing 100871, China. E-mails: lvxx@pku.edu.cn (X. Lv), duanpeihu@pku.edu.cn (P. Duan),
duanzs@pku.edu.cn (Z. Duan).}

\thanks{
 G. Chen is with Department of Electrical Engineering, City University of Hong Kong, Hong Kong SAR, China. E-mail: eegchen@cityu.edu.hk (G. Chen).}

\thanks{
L. Shi is with Department of Electronic and Computer Engineering, the Hong Kong University of Science and Technology, Clear Water
Bay, Kowloon, Hong Kong, China. E-mail: eesling@ust.hk (L. Shi)}}
\maketitle

\begin{abstract}
  This paper proposes an event-triggered variational Bayesian filter for remote state estimation with unknown and time-varying noise  covariances. After presetting multiple nominal process noise covariances and an initial measurement noise covariance, a variational Bayesian method  and a fixed-point iteration method  are utilized to jointly  estimate the posterior state vector and the unknown noise covariances under a stochastic event-triggered mechanism. The proposed  algorithm  ensures low communication loads  and
  excellent estimation performances for a wide range of unknown noise covariances. Finally, the performance of the proposed algorithm is demonstrated by  tracking simulations of a vehicle.

\end{abstract}

\begin{IEEEkeywords}
 Event-based  scheduling; Variational Bayesian; Kalman filter; Remote state estimation.
\end{IEEEkeywords}

\section{Introduction}

The last few decades have seen a great development of state estimation techniques with their wide applications  in navigation, tracking, and sensor networks. Various types of
state estimation algorithms have been proposed such as
linear filters\cite{kalman1960new},  nonlinear filters\cite{julier2000new}\cite{arasaratnam2009cubature}, particle filers\cite{arulampalam2002tutorial}, filters with state constraints\cite{simon2002kalman,lv2021distributed,lv2021distributed2}, filters with intermittent observations\cite{sinopoli2004kalman,kluge2010stochastic}, distributed filters\cite{battistelli2014kullback,duan2020distributed}, etc.

\vspace{4pt}
 In practice,  considering small batteries equipped by  sensors with  limited channel bandwidths, an efficient remote estimation system is desirable. In such a system, the sensor decides whether it sends  measurement to a remote estimator. Usually, a tradeoff between the communication rate and the estimation performance is necessary. Various event-based communication schemes  provide a  solution to such problems, which are typically subject to limited transmission resources as depicted in Fig.\ref{fig communication diagram}.
  An event-based sensor data scheduler and a  state estimation algorithm were proposed in \cite{wu2012event}. However, to drive the minimum mean-squared error (MMSE) estimator,   Gaussian approximation was adopted for prior estimates, and an exact MMSE estimation algorithm was designed by exploiting the generalized closed skew normal distribution in \cite{he2018event}.  Then, two new  types of  event-triggered schedules were designed to eliminate an associated approximation problem \cite{han2015stochastic}. Lately, a robust event-triggered remote state estimation algorithm was derived by minimizing a risk-sensitive criterion in \cite{huang2019robust}. In the presence of packet drops between the sensor and the estimator,  the remote state estimation problem was studied,  and a suboptimal estimator was proposed in \cite{xu2020remote}. For general non-Gaussian systems, an event-based transmission scheme was derived for particle filter based on remote state estimation in \cite{li2019information}.

\begin{figure}[!htb]
\centering
{\includegraphics[width=3.5in]{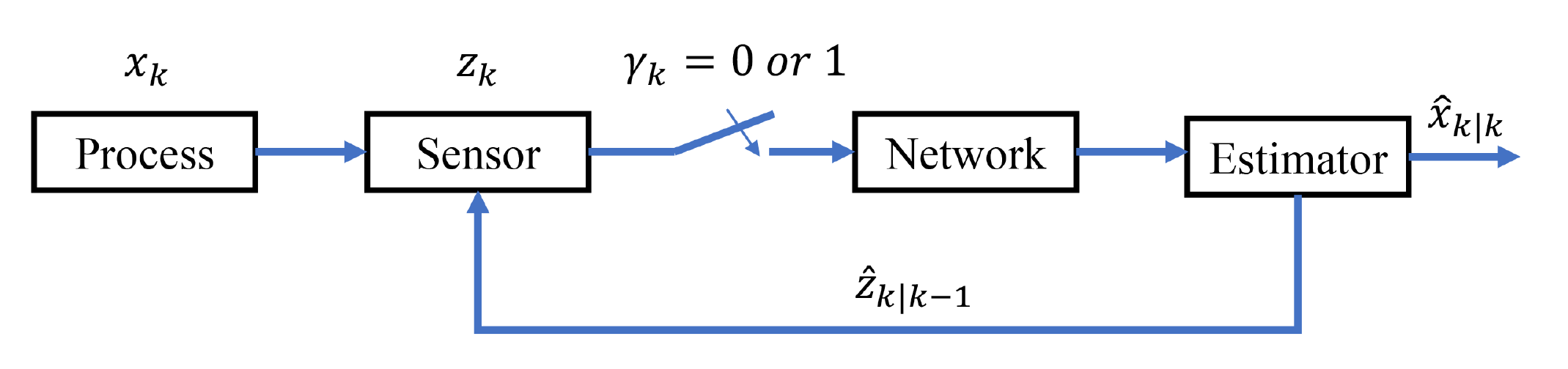}}
\caption{Event-triggered sensor scheduling framework.}
\label{fig communication diagram}
\end{figure}

\vspace{4pt}
  However, in practical applications, the   measurement and the process    noise covariances cannot be precisely calculated, which may even be time-varying.
  To jointly estimate the state and the time-varying noise covariances,
  the variational Bayesian method is effective.
    In \cite{sarkka2009recursive}, a recursive adaptive  Kalman filter was proposed by forming a separable variational approximation based on the inverse-Gamma distribution. Later, by combining  variational Bayesian and Gaussian filtering methods, a variational Bayesian adaptive algorithm was designed in \cite{hartikainen2013variational}. However, the above two filters cannot handle the case with unknown process noise covariances. Recently, by using inverse Wishart priors for the predicted error covariance matrix, a variational Bayesian adaptive Kalman filter was presented in \cite{huang2017novel}. Moreover, a new variational adaptive Kalman filter with the Gaussian-inverse-Wishart mixture distribution was developed in \cite{huang2020variational}.
   By modelling the probability density functions of state transition and measurement as Gaussian-Gamma mixture distributions, an adaptive Kalman filter was derived in \cite{zhu2021adaptive}.
   Although these variational Bayesian algorithms have good filtering performances, they are costly in performing information transmission from a sensor to a remote estimator. It is therefore desirable  to solve the above Bayesian filtering problem with lower transmission costs, which is the objective of the present paper.

\vspace{4pt}
Specifically, this paper addresses the event-triggered variational Bayesian estimation problem with unknown and time-varying noise covariances.
Our main contributions are briefly summarized as follows:
\begin{enumerate}
\item  For remote state estimation with unknown and time-varying noise covariances, this paper proposes an event-triggered variational Bayesian filter to jointly estimate the state and the  noise covariances.
    Multiple inverse Wishart priors are utilized for estimating the predicted error covariance with
     weight combinations  being adaptively inferred by the variational Bayesian method.

\item With an event-triggered mechanism, the algorithm ensures excellent and robust performances for a wide range of unknown noise covariances with low communication overhead.

\end{enumerate}

\vspace{4pt}
 The remainder of this paper is organized as follows. Section \uppercase\expandafter{\romannumeral2} presents preliminaries  and the problem formulation.
 Section \uppercase\expandafter{\romannumeral3} describes the designed variational Bayesian filter. Section \uppercase\expandafter{\romannumeral4} provides   simulations to verify the effectiveness of the proposed algorithm. Section
 \uppercase\expandafter{\romannumeral5} concludes the investigation.

%
%

\section{Preliminaries and Problem Statement}

\subsection{Preliminaries}
Let $P$ be a symmetric  positive definite matrix of random variables, and  define the inverse Wishart distribution as
\begin{equation}
\begin{aligned}
&\text{IW}(P| g,G) =\frac{|G|^{\frac{ g}{2}}|P|^{\frac{-( g+n+1)}{2}}e^{-\frac{1}{2}tr( GP^{-1})}}{2^{\frac{ g n }{2}}\Gamma_n(\frac{g}{2})},
\end{aligned}
\end{equation}
where $n$, $g$, and $G$  denote the dimensions of $P$,
the degree of freedom (dof), a positive definite scale matrix, respectively, and $\Gamma_n$ is the $n$-th order gamma function.



\vspace{4pt}
Let $\lambda$ be the $M$-dimensional binary variable with $\lambda_{j}\in\{0,1\}$ such that $\sum_{j=1}^{M}\lambda_{j} =1$. The categorical distribution is defined as
\begin{equation}\label{eq categorical distribution}
\begin{aligned}
\text{Cat}(\lambda|\mu,M)=\prod_{j=1}^M [\mu_{j}]^{\lambda_{j}},
\end{aligned}
\end{equation}
where $\mu = (\mu_1,...,\mu_M)$ is subject to constraints $0\leq \mu_j\leq 1$ and $\sum_{j=1}^{M} \mu_j =1$.
Moreover,
the Dirichlet distribution is defined as
\begin{equation} \label{eq dirichlet distribution}
\begin{aligned}
 \text{Dir}(\mu,\alpha,M)= C(\alpha)\prod_{j=1}^{M}[\mu_{j}]^{\alpha_{j}-1},
\end{aligned}
\end{equation}
where $\alpha = (\alpha_1,...,\alpha_M)$, $\hat \alpha = \sum_{j=1}^M\alpha_j$, $C(\alpha) = \frac{\Gamma(\hat\alpha)}{\Gamma(\alpha_{1})\cdot \cdot \cdot \Gamma(\alpha_{M})}$,  $E\{\mu_j\}= \frac{\alpha_j}{\hat\alpha}$, and $\Gamma(\cdot)$ is the gamma function. The above distributions will be used for the estimation of covariances later.

\vspace{4pt}
Variational inference is a method based on  optimization to estimate unknown probability densities.
Denote the set of all latent unknown variables by $\Lambda =\{\theta_1,..., \theta_N\}$ and the set of all observed variables by $Z=\{Z_1,..., Z_M\}$.  According to the mean field theory \cite{parisi1988statistical}, the true distribution $p(\Lambda|Z)$ can be approximated by $p(\Lambda|Z) \approx \prod_{k=1}^{N} q(\theta_k) $, where every $\theta_k$ has its own independent distribution  $q(\theta_k)$. Moreover, $q(\theta_k)$ can be designed by minimizing the Kullback-Leibler (KL) divergence between the true distribution and the  approximation as follows \cite{2006Pattern}:
\begin{equation}\label{eq KL}
\begin{aligned}
\text{KL}\left[\prod_{k=1}^{N} q(\theta_k)||p(\Lambda|Z)\right] &= \int \prod_{k=1}^{N} q(\theta_k)\text{log}\left(\frac{\prod_{k=1}^{N} q(\theta_k)}{p(\Lambda|Z)}\right)   d\Lambda\\
 & = -\mathcal{L}\left(\prod_{k=1}^{N} q(\theta_k)\right)+\text{log}~p(Z),
\end{aligned}
\end{equation}
where $\mathcal{L}\left(\prod_{k=1}^{N} q(\theta_k)\right)$ is the evidence lower bound,  formulated as
\begin{equation}\label{eq ELBO}
\begin{aligned}
\mathcal{L}\left(\prod_{k=1}^{N} q(\theta_k)\right) &= \int \text{log}~p(\Lambda,Z)\prod_{k=1}^{N} q(\theta_k) d\theta_k\\
&~~~~-\sum_{k=1}^{N}\int q(\theta_k)\text{log}~q(\theta_k)d\theta_k.
\end{aligned}
\end{equation}

\vspace{4pt}
Based on the variational inference theory,
minimizing the KL divergence (\ref{eq KL}) is equivalent to maximizing the evidence lower bound (\ref{eq ELBO}), which yields  the optimal solution
 \begin{equation}\label{eq max ELBO}
 q^{*}(\theta_k) = \text{arg max}_{q(\theta_k)}\mathcal{L}\left(\prod_{k=1}^{N} q(\theta_k)\right).
 \end{equation}
 This optimal solution  is calculated as \cite{2006Pattern}
 \begin{equation}\label{eq compute KL}
 \text{log}~q^{*}(\theta_k) = E^{*}_{\Lambda^{-\theta_k}}\{\text{log} ~p(\Lambda,Z)\} + c_{\theta_k},
 \end{equation}
where $\Lambda^{-\theta_k}$ denotes the set of all unknown variables $\Lambda$ except $\theta_k$,  and $c_{\theta_k}$ is a constant independent of $\theta_k$. When  these variational parameters are coupled,  the fixed-point iteration method \cite{hildebrand1987introduction} can be adopted to obtain an approximate optimal solution for (\ref{eq KL}) as
 \begin{equation}\label{eq compute KL iteration}
 \text{log}~q^{i+1}(\theta_k) = E^{i}_{\Lambda^{-\theta_k}}\{\text{log}~ p(\Lambda,Z)\} + c_{\theta_k},
 \end{equation}
where $i$ denotes the $n$-th iteration.

\vspace{4pt}
\begin{lemma}\label{lemma block inverse matrix}
For matrices $A$, $B$, $C$, and $D$  with appropriate dimensions, if $A$ and $E=D-CA^{-1}B$ are nonsingular, then
\begin{equation}
\begin{aligned}
\left[\begin{array}{cc} A & B \\ C
& D  \end{array}\right]^{-1} =
\left[\begin{array}{cc} A^{-1} +A^{-1}BE^{-1}CA^{-1} & -A^{-1}BE^{-1} \\  -E^{-1}CA^{-1}
& E^{-1}  \end{array}\right].
\end{aligned}
\end{equation}
\end{lemma}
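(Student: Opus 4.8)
The plan is to prove the identity by direct verification rather than by re-deriving it: I would take the matrix on the right-hand side as a candidate inverse and confirm that its product with the original block matrix is the identity. Since both are square of equal size, checking one side suffices.

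First, I would write the candidate as $\begin{bmatrix} P & Q \\ R & S \end{bmatrix}$ with $P = A^{-1} + A^{-1}BE^{-1}CA^{-1}$, $Q = -A^{-1}BE^{-1}$, $R = -E^{-1}CA^{-1}$, and $S = E^{-1}$, and compute the four blocks of the product $\begin{bmatrix} A & B \\ C & D \end{bmatrix}\begin{bmatrix} P & Q \\ R & S \end{bmatrix}$. The first block-row collapses using only $AA^{-1} = I$: the two terms of $AP + BR$ reduce to $I$, and those of $AQ + BS$ cancel to $0$.

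The second block-row is the step that actually invokes the hypothesis on $E$, and is where I expect the only real care to be needed. Both lower blocks contain the combination $D - CA^{-1}B$; the key manoeuvre is to recognize this as $E$ and substitute, after which the invertibility of $E$ yields the cancellations $CP + DR = CA^{-1} - EE^{-1}CA^{-1} = 0$ and $CQ + DS = EE^{-1} = I$. No property of $B$, $C$, or $D$ beyond conformability of dimensions is needed.

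As a remark I would note the structural reason behind the formula, namely the block LU-type factorization $\begin{bmatrix} A & B \\ C & D \end{bmatrix} = \begin{bmatrix} I & 0 \\ CA^{-1} & I \end{bmatrix}\begin{bmatrix} A & 0 \\ 0 & E \end{bmatrix}\begin{bmatrix} I & A^{-1}B \\ 0 & I \end{bmatrix}$, which holds whenever $A$ is nonsingular. Inverting the two unit-triangular factors by negating their off-diagonal blocks, inverting the middle factor to $\begin{bmatrix} A^{-1} & 0 \\ 0 & E^{-1} \end{bmatrix}$ (which is exactly where nonsingularity of both $A$ and $E$ is used), and multiplying the three inverses in reverse order reproduces the stated expression and explains why the Schur complement $E$ governs every block. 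The direct multiplication above nonetheless remains the shortest self-contained argument.
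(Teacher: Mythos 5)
Your verification is correct, and there is nothing in the paper to compare it against: the paper states Lemma~\ref{lemma block inverse matrix} without any proof, treating the block-inversion (Schur complement) formula as a standard fact. Your direct check is sound --- all four blocks of the product reduce as you describe, the only nontrivial cancellations being $CP+DR = CA^{-1}-EE^{-1}CA^{-1}=0$ and $CQ+DS=EE^{-1}=I$ after recognizing $D-CA^{-1}B=E$, and the appeal to the fact that a one-sided inverse of a square matrix over a field is two-sided is legitimate (the blocks $A$ and $D$ may have different sizes, but the full matrix is square, which is all that argument needs). Your closing remark on the factorization
\begin{equation*}
\left[\begin{array}{cc} A & B \\ C & D \end{array}\right]
=\left[\begin{array}{cc} I & 0 \\ CA^{-1} & I \end{array}\right]
\left[\begin{array}{cc} A & 0 \\ 0 & E \end{array}\right]
\left[\begin{array}{cc} I & A^{-1}B \\ 0 & I \end{array}\right]
\end{equation*}
is more than decoration here: it is precisely the structure the paper exploits later, where $\Phi_{k|k-1}$ is factorized in this UDL/LDU form (with $E$ reducing to $R_k$, since $C=B^{T}=H P_{k|k-1}$ makes the Schur complement of the top-left block equal $R_k$) in order to decouple $P_{k|k-1}$ and $R_k$ in equations (\ref{eq phi})--(\ref{eq phi-1}) and to evaluate $\Theta^{i+1}_k$ via this very lemma. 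So your second route both proves the lemma and explains why it meshes with the paper's decoupling step; either of your two arguments would serve as a complete proof.
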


\subsection{Problem Statement}
Consider a discrete-time system as follows:
\begin{equation}\label{eq dynamical equation}
x_k = F_kx_{k-1}+\omega_k,
\end{equation}
\begin{equation}\label{eq measurement equation}
z_k = H_kx_k+\nu_k,
\end{equation}
where $x_k\in \mathbb{R}^n$ is the system state, $F_k \in \mathbb{R}^{n\times n}$ is the state transition matrix, $z_k\in\mathbb{R}^m$ is the measurement, $H_k\in \mathbb{R}^{m\times n}$ is the measurement matrix, and $\omega_k \in \mathbb{R}^{n}$  and  $\nu_k \in \mathbb{R}^m$ are mutually uncorrelated zero-mean Gaussian noise with covariances $Q_k>0$ and $R_k>0$, respectively.
The initial state  $x_0$ and its estimates $\hat x_{0|0}$ obey Gaussian distributions.

\vspace{4pt}
 In practice, the process and the  measurement noise covariances may be unknown and time-varying, which are usually empirically estimated.
 In this paper, it is assumed that the initial nominal measurement noise covariance and  $M$ nominal process noise covariances are described by
$\bar R_0$ and
\begin{equation}\label{eq Q}
\begin{aligned}
\bar Q_k = \prod_{j=1}^{M} [\bar Q_{j,k}]^{\lambda_{j,k}},
\end{aligned}
\end{equation}
respectively, where $\bar Q_{j,k}$ represents the $j$-th nominal process noise covariance at time $k$, and $\lambda_{k} = (\lambda_{1,k},..., \lambda_{M,k})$ is subject to the categorical distribution like (\ref{eq categorical distribution}). Practically, $\lambda_k$ is not exactly known.

\vspace{4pt}
\begin{remark}
Under the event-triggered mechanism, the unknown process noise covariances cannot be jointly estimated in a recursive form like the measurement noise covariances by the variational Bayesian method. To solve this problem, the unknown process noise covariances are expected to be estimated  by adaptive weight combinations  of multiple nominal process noise covariances. From (\ref{eq Q}),
$E\{\bar Q_k\}=\sum_{j=1}^{M}E\{\lambda_{j,k}\}\bar Q_{j,k}$, where $E\{\lambda_{j,k}\}$ is utilized as the adaptive weight parameter.
 Hence, the unknown process noise covariance estimation problem
is reformulated  as  the adaptive weight combination problem of  multiple nominal process noise covariances.
\end{remark}

\vspace{4pt}
In the present framework, when the sensor obtains the measurement $z_k$, to reduce the transmission cost, it first makes the decision whether to send the measurement $z_k$ to a remote estimator.  To do so, a binary decision variable $\gamma_k=1$ or $0$ is introduced. Specifically, if the sensor decides to send the measurement, $\gamma_k=1$; otherwise $\gamma_k =0$.
 Here, the  information set at time $k$ for the estimator is defined  as
$\mathcal{I}_k \triangleq \{ \gamma_0z_0,...,\gamma_kz_k\}\cup\{\gamma_0,...,\gamma_k\}$
 with $\mathcal{I}_{-1}=\emptyset$.
 Here, a closed-loop stochastic event-triggered schedule is adopted to design the decision variable $\gamma_k$ as [15]
\begin{equation}\label{eq event-triggered gamma}
\gamma_k = \left\{
\begin{array}{cc}
0, & \zeta \leq \varphi(e_k)\\
1, &  \zeta > \varphi (e_k),\\
\end{array}\right.
\end{equation}
where $\zeta$ is a uniformly distributed random variable over $[0,1]$ at every step;
$e_k = z_k - \hat z_{k|k-1}$ is the innovation with the feedback predicted measurement $\hat z_{k|k-1}=H_k\hat x_{k|k-1}$; $Y_k$ is a positive definite matrix; and $\varphi(e_k)$ is designed as
\begin{equation}\label{eq event sigma}
\varphi(e_k) = \text{exp}\left(-\frac{1}{2}e^T_k Y_k e_k\right).
\end{equation}

\vspace{4pt}
The objective of this paper is to design an event-triggered variational Bayesian filter for the sensor system with  unknown  process and measurement noise covariances. The following tasks will be accomplished:
\begin{enumerate}
\item design  a variational Bayesian filter in the presence  of unknown noise covariances with multiple nominal covariances.

\item  design the above filter with a low transmission cost using  the event-triggered schedule (\ref{eq event-triggered gamma}).
\end{enumerate}

\section{Variational Bayesian  filter}
 In this section, a novel filtering algorithm is proposed  by using the variational Bayesian method, estimating the unknown states $x_k$ under the event-triggered mechanism.

\vspace{4pt}
\subsection{Unknown Noise Distributions}
First, it follows from (\ref{eq Q}) that  the prior state $\hat x_{k|k-1}$ and the nominal predicted error covariance $P_{k|k-1}$ are given by
\begin{equation}\label{eq predicted state}
\begin{aligned}
\hat x_{k|k-1} &= F_{k-1}\hat x_{k-1|k-1},
\end{aligned}
\end{equation}
and
\begin{equation}\label{eq predicted covariances}
\begin{aligned}
P_{k|k-1} &= \prod_{j=1}^{M} [ F_{k-1}P_{k-1|k-1}F^T_{k-1}+\bar Q_{j,k}  ]^{\lambda_{j,k}}\\
& =\prod_{j=1}^{M}[P_{j,k|k-1}]^{\lambda_{j,k}},
\end{aligned}
\end{equation}
respectively, where $P_{j,k|k-1}= F_{k-1}P_{k-1|k-1}F^T_{k-1}+\bar Q_{j,k}$, and $\lambda_{j,k}$ is given in (\ref{eq Q}). However, since the noise covariances are unknown, it is not feasible to directly perform (\ref{eq predicted state}) and (\ref{eq predicted covariances}) for state estimation. To address this issue, the noise covariances are estimated as follows.

\vspace{4pt}
  Following \cite{2006Pattern}, the inverse Wishart distribution is utilized as the conjugate prior distribution for the Gaussian distribution with an unknown covariance matrix. In so doing, the  distributions of $p(P_{k|k-1}|\mathcal{I}_{k-1})$ and $p(R_k|I_{k})$ are chosen as
\begin{equation}\label{eq PIWlambda}
\begin{aligned}
&p(P_{k|k-1}|\lambda_k, \mathcal{I}_{k-1})=\prod _{j=1}^{M}\left[ \text{IW}(P_{k|k-1}|\hat g_{j,k|k-1},\hat G_{j,k|k-1})\right]^{\lambda_{j,k}}
\end{aligned}
\end{equation}
and
\begin{equation}
\begin{aligned}
&p(R_{k-1}| \mathcal{I}_{k-1})=\text{IW}(R_{k-1}|\hat s_{k-1|k-1},\hat S_{k-1|k-1}),
\end{aligned}
\end{equation}
respectively, where $\hat g_{j,k|k-1}$ is a preselected tuning parameter, $\hat G_{j,k|k-1}=\hat g_{j,k|k-1}P_{j,k|k-1}$, $\hat s_{0|0}$ is another preselected tuning parameter, and $\hat S_{0|0}=\hat s_{0|0}R_{0|0}$. It should be noted that $\lambda_k$ in (\ref{eq PIWlambda}) are unknown.

\vspace{4pt}
To obtain the value for $\lambda_k$ in (\ref{eq PIWlambda}), following \cite{huang2020variational}, it is modelled
 by the categorical distribution, like (\ref{eq categorical distribution}), as
\begin{equation*}
\begin{aligned}
p(\lambda_{k}|\mu_k)& = \text{Cat}(\lambda_k|\mu_k,M)=\prod_{j=1}^M [\mu_{j,k}]^{\lambda_{j,k}}.
\end{aligned}
\end{equation*}
Then, since the conjugate prior distribution of the categorical distribution is the Dirichlet distribution \cite{2006Pattern},
 $\mu_k$ is  modelled by the Dirichlet distribution, like (\ref{eq dirichlet distribution}), as
\begin{equation*}
\begin{aligned}
p(\mu_k)& = \text{Dir}(\mu_k,\alpha_{k|k-1},M)\\
&= C(\alpha_{k|k-1})\prod_{j=1}^{M}[\mu_{j,k}]^{\alpha_{j,k|k-1}-1}.
\end{aligned}
\end{equation*}
 To this end, the distributions of the unknown noise covariances have been constructed. Based on these distributions,   the posterior estimates will be inferred  by the variational Bayesian method, as further discussed in the following.

\subsection{Variational Approximation: $\gamma_k = 0$}
This subsection  studies the case where $z_k$ is not transmitted by the sensor at step $k$ under the event-triggered law (\ref{eq event-triggered gamma}), which results in the unavailability of $z_k$.
Under this circumstance,  $x_k$ and $z_k$ are regarded as jointly Gaussian \cite{han2015stochastic}.
Hence, they are strongly coupled and have to be jointly estimated. It is worth mentioning that the coupling poses a great challenge to obtain the solutions  of   $x_k$, $P_{k|k-1}$, and $R_k$ by using the traditional variational Bayesian technique, since $P_{k|k-1}$, and $R_k$ are coupled in the  logarithm of the joint  probability density function. To address this difficulty,  a novel decoupling approach is proposed as follows.

\vspace{4pt}
For convenience, denote the set of the unknown variables as
\begin{equation}\label{eq lambda0}
\begin{aligned}
\Lambda_0 \triangleq \{(x_k,z_k) ,P_{k|k-1},R_k,\lambda_k,\mu_k\}.
\end{aligned}
\end{equation}

\vspace{4pt}
In the following, the approximate posterior  probability density function (PDF) for every element in $\Lambda_0$ is calculated.

\vspace{4pt}
\subsubsection{ The logarithm of joint PDF $p(\Lambda_0,\gamma_k)$}
First, the joint PDF $p(\Lambda_0,\gamma_k)$ is factorized as
\begin{equation}\label{eq p lambda0 gamma 1}
\begin{aligned}
p(\Lambda_0,\gamma_k|\mathcal{I}_{k-1})=&p(\gamma_k|z_k,Y_k,\mathcal{I}_{k-1})
 p(x_k,z_k|P_{k|k-1},R_k,\mathcal{I}_{k-1})\\
&\times p(P_{k|k-1}|\lambda_k)p(\lambda_k|\mu_k)p(\mu_k)
p(R_k).
\end{aligned}
\end{equation}

\vspace{4pt}
Based on  the event-triggered scheme (\ref{eq event-triggered gamma}), one has
\begin{equation*}
\begin{aligned}
&p(\gamma_k=0|z_k,Y_k,\mathcal{I}_{k-1})\\
 &= Pr(\text{exp}(-\frac{1}{2}e^T_kY_ke_k)\geq \zeta_k|z_k,Y_k,\mathcal{I}_{k-1})\\
 & =\text{exp}(-\frac{1}{2}(z_k-H_k\hat x_{k|k-1})^TY_k(z_k-H_k\hat x_{k|k-1})).
\end{aligned}
\end{equation*}

\vspace{4pt}
Define  $\phi_{k} \triangleq  [x^T_{k}, z^T_{k}]^T$,  $[\hat x^T_{k|k-1}, \hat z^T_{k|k-1}]^T \triangleq E\{ \hat \phi_{k|k-1}|\mathcal{I}_{k-1}\}$, and $\Phi_{k|k-1}\triangleq E\{ (\hat\phi_k - \phi_k)(\hat\phi_k -\phi_k)^T|\mathcal{I}_{k-1} \}$. Then, $\Phi_{k|k-1}$ is obtained as
\begin{equation*}
\begin{aligned}
&\Phi_{k|k-1} = \left[\begin{array}{cc} P_{k|k-1}&P_{k|k-1}H^T_k\\H_kP_{k|k-1}&H_kP_{k|k-1}H^T_k+R_k  \end{array}\right].
\end{aligned}
\end{equation*}
Hence, the probability density function of $\phi_k$ in (\ref{eq p lambda0 gamma 1}) is given by
\begin{equation}\label{eq p(x,z)}
\begin{aligned}
&p(x_k,z_k|P_{k|k-1},R_k, \mathcal{I}_{k-1})=\mathcal{N}(\phi_{k}|\hat \phi_{k|k-1},
\Phi_{k|k-1})\\
&=\frac{ \text{exp}\left(-\frac{1}{2} \left[\begin{array}{c} x_{k}-\hat x_{k|k-1}\\z_{k}-\hat z_{k|k-1}  \end{array}\right]^T\Phi^{-1}_{k|k-1} \left[\begin{array}{c} x_{k}-\hat x_{k|k-1}\\z_{k}-\hat z_{k|k-1}  \end{array}\right]\right)}{(2\pi)^{(n+m)/2}|\Phi_{k|k-1}|^{1/2}}.
\end{aligned}
\end{equation}

\vspace{4pt}
Using the above distributions, $p(\Lambda_0,\gamma_k)$ in (\ref{eq p lambda0 gamma 1})  is reformulated as
\begin{equation} \label{eq p(lambda0)gamma}
\begin{aligned}
 p(\Lambda_0,\gamma_k|\mathcal{I}_{k-1})=&p(\gamma_k|z_k,Y_k,\mathcal{I}_{k-1}) \mathcal{N}(\phi_{k|k-1}|\hat \phi_{k|k-1},
\Phi_{k|k-1})\\
&\times \prod _{j=1}^{M}\left[ \text{IW}(P_{k|k-1}|\hat g_{j,k|k-1},\hat G_{j,k|k-1})\right]^{\lambda_{j,k}}\\
&\times \text{Cat}(\lambda_k|\mu_k,M) \text{Dir}(\mu_k,\alpha_{k|k-1},M)\\
&\times \text{IW}(R_{k}|\hat s_{k|k-1},\hat S_{k|k-1}).\\
\end{aligned}
\end{equation}

\vspace{4pt}
Then, using (\ref{eq compute KL iteration}), $\text{log} ~ p(\Lambda_0,\gamma_k|\mathcal{I}_{k-1})$ in (\ref{eq p(lambda0)gamma}) is  calculated as
\begin{equation}\label{eq log pLambda gamma}
\begin{aligned}
&\text{log} ~ p(\Lambda_0,\gamma_k|\mathcal{I}_{k-1})\\
=& -\frac{1}{2}(z_k-H_k\hat x_{k|k-1})^TY_k(z_k-H_k\hat x_{k|k-1})-0.5\text{log}~|\Phi_{k|k-1}|\\
&+\sum_{j=1}^M  \lambda_{j,k} \text{log}\mu_{j,k}-\frac{1}{2}(\phi_{k}-\hat \phi_{k|k-1})^T\Phi^{-1}_{k|k-1}(\phi_{k}-\hat \phi_{k|k-1})\\
&+\sum_{j=1}^M \lambda_{j,k} \bigg[ 0.5 \hat g_{j,k|k-1} \text{log}(|\hat G_{j,k|k-1}|)-0.5\text{tr}(\hat G_{j,k|k-1}P^{-1}_{k|k-1})\\
&-0.5(\hat g_{j,k|k-1}+n+1)\text{log}(|P_{k|k-1}|)-0.5(n\hat g_{j,k|k-1})\text{log}2\\
&-\text{log}\Gamma_n(\hat g_{j,k|k-1}/2) \bigg]+\sum_{j=1}^M (\alpha_{j,k|k-1}) \text{log}\mu_{j,k} \\
&-0.5(\hat s_{k|k-1}+m+1)\text{log}~|R_{k}|-0.5\text{tr}(\hat S_{k|k-1}R^{-1}_{k}) + c_{\Lambda_0},
\end{aligned}
\end{equation}
 where $c_{\Lambda_0}$ is a constant independent of $\Lambda_0$.

\vspace{4pt}
\subsubsection{Decoupling of $\Phi_{k|k-1}$}
Each element of $\Lambda_0$ in (\ref{eq p(lambda0)gamma}) will be calculated alternatively, and the corresponding posterior distribution should have the same functional form as the prior distribution according to the variational inference theory \cite{2006Pattern}. However, $P_{k|k-1}$ and $R_k$ are coupled in $\Phi_{k|k-1}$,
which poses a challenge in deriving the posterior $x_k$, $P_{k|k-1}$, and $R_k$ by using (\ref{eq compute KL iteration}).  In (\ref{eq log pLambda gamma}),
$-0.5\text{log}~|\Phi_{k|k-1}|$ and $-\frac{1}{2}(\phi_{k}-\hat \phi_{k|k-1})^T\Phi^{-1}_{k|k-1}(\phi_{k}-\hat \phi_{k|k-1})$ will be decoupled for $P_{k|k-1}$ and $R_k$.
 In addition, the expectation of $\Phi^{-1}_{k|k-1}$ is calculated, as follows.

\vspace{4pt}
First,  $\Phi_{k|k-1}$ is factorized as
\begin{equation}\label{eq phi}
\begin{aligned}
\Phi_{k|k-1}=\left[ \begin{array}{cc}I & 0\\ H& I\end{array}\right]
\left[ \begin{array}{cc}P_{k|k-1} & 0\\ 0& R_{k}\end{array}\right]
\left[ \begin{array}{cc}I & H^T\\ 0& I\end{array}\right].
\end{aligned}
\end{equation}
Then, the inverse of $\Phi_{k|k-1}$ is computed by
\begin{equation}\label{eq phi-1}
\begin{aligned}
\Phi^{-1}_{k|k-1}=\left[ \begin{array}{cc}I & -H^T\\ 0& I\end{array}\right]
\left[ \begin{array}{cc}P^{-1}_{k|k-1} & 0\\ 0& R^{-1}_{k}\end{array}\right]
\left[ \begin{array}{cc}I & 0\\ -H& I\end{array}\right].
\end{aligned}
\end{equation}
  It follows from (\ref{eq phi}) that the determinant of $\Phi_{k|k-1}$ is obtained as
\begin{equation}\label{eq |phi|}
\begin{aligned}
|\Phi_{k|k-1}| =|P_{k|k-1}||R_{k}|.
\end{aligned}
\end{equation}
Hence,
\begin{equation}\label{eq ln|phi|}
\begin{aligned}
\text{log}|\Phi_{k|k-1}| =\text{log}|P_{k|k-1}|+\text{log} |R_{k}|.
\end{aligned}
\end{equation}

\vspace{4pt}
Now, based on (\ref{eq phi-1}) and (\ref{eq p(x,z)}), define a new variable as
\begin{equation*}
\begin{aligned}
\rho_{k|k-1} =\left[ \begin{array}{cc}I & 0\\ -H& I\end{array}\right]\left[\begin{array}{c} x_{k}-\hat x_{k|k-1}\\z_{k}-\hat z_{k|k-1}  \end{array} \right].
\end{aligned}
\end{equation*}
Then, $\rho_{k|k-1}\rho^T_{k|k-1}$ is computed by
\begin{equation}\label{eq rhorho^T}
\begin{aligned}
&~~~~\rho_{k|k-1}\rho^T_{k|k-1}\\
&=\left[ \begin{array}{cc}I & 0\\ -H& I\end{array}\right](\phi_{k}-\hat\phi_{k|k-1})(\phi_{k}-\hat\phi_{k|k-1})^T\left[ \begin{array}{cc}I & 0\\ -H& I\end{array}\right]^T\\
& = \left[ \begin{array}{cc}
(\rho_{k|k-1}\rho^T_{k|k-1} )_{xx} & (\rho_{k|k-1}\rho^T_{k|k-1} )_{xz}\\
(\rho_{k|k-1}\rho^T_{k|k-1} )_{zx} &
(\rho_{k|k-1}\rho^T_{k|k-1} )_{zz}\\
\end{array} \right],
\end{aligned}
\end{equation}
where $(\rho_{k|k-1}\rho^T_{k|k-1} )_{xx}$ and $(\rho_{k|k-1}\rho^T_{k|k-1} )_{zz}$ are calculated as
\begin{equation}
\begin{aligned}
(\rho_{k|k-1}\rho^T_{k|k-1} )_{xx}=(x_{k}-\hat x_{k|k-1}) (x_{k}-\hat x_{k|k-1})^T
\end{aligned}
\end{equation}
and
\begin{equation}\label{eq rhorhozz}
\begin{aligned}
(\rho_{k|k-1}\rho^T_{k|k-1} )_{zz}=&H_k(x_k-\hat x_{k|k-1})(x_k-\hat x_{k|k-1})^TH^T_k\\
&-(z_k-\hat z_{k|k-1})(x_k-\hat x_{k|k-1})^TH^T\\
& -H_k(x_k-\hat x_{k|k-1})(z_k-\hat z_{k|k-1})\\
& +(z_k-\hat z_{k|k-1})(z_k-\hat z_{k|k-1})^T,
\end{aligned}
\end{equation}
respectively.

\vspace{4pt}
Combined (\ref{eq phi-1})-(\ref{eq rhorhozz}), one has
\begin{equation}\label{eq phiPhiphi}
\begin{aligned}
&\frac{1}{2}(\phi_{k}-\hat \phi_{k|k-1})^T\Phi^{-1}_{k|k-1}(\phi_{k}-\hat \phi_{k|k-1})\\
& =\frac{1}{2}\left[\begin{array}{c} x_{k}-\hat x_{k|k-1}\\z_{k}-\hat z_{k|k-1}  \end{array}\right]^T\Phi^{-1}_{k|k-1}\left[\begin{array}{c} x_{k}-\hat x_{k|k-1}\\z_{k}-\hat z_{k|k-1}  \end{array}\right]\\
&=0.5\rho^T_{k|k-1} \left[ \begin{array}{cc}P^{-1}_{k|k-1} & 0\\ 0& R^{-1}_{k}\end{array}\right]\rho_{k|k-1}\\
& = 0.5\text{tr}\left( \rho_{k|k-1}\rho^T_{k|k-1} \left[ \begin{array}{cc}P^{-1}_{k|k-1} & 0\\ 0& R^{-1}_{k}\end{array}\right] \right)\\
& =0.5\text{tr}((\rho_{k|k-1}\rho^T_{k|k-1} )_{xx}P^{-1}_{k|k-1})+0.5\text{tr}((\rho_{k|k-1}\rho^T_{k|k-1} )_{zz}R^{-1}_{k}).
\end{aligned}
\end{equation}
 Hence, according to (\ref{eq ln|phi|}) and (\ref{eq phiPhiphi}), $P_{k|k-1}$ and $R_k$ are decoupled. Then, the approximate optimal solutions $q(P_{k|k-1})$ and $q(R_k)$ can be calculated by (\ref{eq compute KL iteration}), respectively. Furthermore,  the  posterior distributions are derived in  the same form as the prior distribution to  provide a  recursive filter.

\subsubsection{The update of $\phi_k$}
Now, every component in $\Lambda_0$ is computed.
Let $\theta = \phi_k$, so that
\begin{equation}\label{eq q(x,z)}
\begin{aligned}
&~~~~\text{log} ~ q^{i+1}(\phi_k) \\
&= -\frac{1}{2}(z_k-H_k\hat x_{k|k-1})^T Y_k(z_k-H_k\hat x_{k|k-1})\\
& ~~~~-\frac{1}{2}(\phi_{k}-\hat \phi_{k|k-1})^TE^i\{\Phi^{-1}_{k|k-1} \}(\phi_{k}-\hat \phi_{k|k-1})+c_{x_k, z_k}\\
& = -\frac{1}{2}(\phi_{k}-\hat \phi_{k|k-1})^T(\Theta^{i+1}_k )^{-1}(\phi_{k}-\hat \phi_{k|k-1}) +c_{x_k, z_k},
\end{aligned}
\end{equation}
where  $c_{x_k, z_k}$ is a constant independent of $x_k, z_k$, and
$\Theta^{i+1}_k$  is given by \begin{equation}\label{eq Theta}
\begin{aligned}
\Theta^{i+1}_k &= \left[\begin{array}{cc} P^{i+1}_{k|k}& P^{i+1}_{xz,k|k} \\  P^{i+1}_{zx,k|k}
& P^{i+1}_{zz,k|k} \end{array}\right]\\
&= \left(E^i\{\Phi^{-1}_{k|k-1} \} +
\left[\begin{array}{cc}0 & 0 \\  0
& Y_k  \end{array}\right]\right)^{-1}.
\end{aligned}
\end{equation}

\vspace{4pt}
It follows from (\ref{eq phi-1}) that $E^{i}\{ \Phi^{-1}_{k|k-1} \}$ in (\ref{eq Theta}) can be computed by
\begin{equation}\label{eq phi-1 ex}
\begin{aligned}
E^{i}\{ \Phi^{-1}_{k|k-1} \}&=\left[ \begin{array}{cc}I & -H^T\\ 0& I\end{array}\right]
\left[ \begin{array}{cc}E^{i}\{P^{-1}_{k|k-1} \} & 0\\ 0& E^{i}\{R^{-1}_{k} \}\end{array}\right]\\
&~~~~\times\left[ \begin{array}{cc}I & 0\\ -H& I\end{array}\right].\\
\end{aligned}
\end{equation}
 Note that $P_{k|k-1}$ and $R_k$ follow inverse Wishart distributions,  and $P^{-1}_{k|k-1}$ and $R^{-1}_k$ obey  Wishart distributions.
  Define $(\tilde P^{i}_{k|k-1})^{-1} = E^{i}\{P^{-1}_{k|k-1}\}$ and $(\tilde R^{i}_{k})^{-1} = E^{i}\{R^{-1}_{k}\}$. Further, $E^i\{\Phi^{-1}_{k|k-1}\}$ in (\ref{eq phi-1 ex}) can be derived as
\begin{equation*}
\begin{aligned}
&~~~~E^{i}\{ \Phi^{-1}_{k|k-1} \}\\
&=\left[ \begin{array}{cc}I & -H^T\\ 0& I\end{array}\right]
\left[ \begin{array}{cc}(\tilde P^{i}_{k|k-1})^{-1} & 0\\ 0& (\tilde R^{i}_{k})^{-1}\end{array}\right]\left[ \begin{array}{cc}I & 0\\ -H& I\end{array}\right]\\
&=\left[ \begin{array}{cc}(\tilde P^{i}_{k|k-1})^{-1}+H^T_{k}(\tilde R^{i}_{k})^{-1} H_{k} & -H^T_{k}(\tilde R^{i}_{k})^{-1}\\ -(\tilde R^{i}_{k})^{-1}H_{k}& (\tilde R^{i}_{k})^{-1}\end{array}\right]\\
& = (\tilde \Phi^{i}_{k|k-1})^{-1}.
\end{aligned}
\end{equation*}

To this end, $\Theta^{i+1}_k$ can be computed by using Lemma \ref{lemma block inverse matrix}.
Denote $A_{\Theta}= (\tilde P^{i}_{k|k-1})^{-1}+H^T_{k}(\tilde R^{i}_{k})^{-1} H_{k}$, $B_{\Theta} =  -H^T_{k}(\tilde R^{i}_{k})^{-1}$, $C_{\Theta} = -(\tilde R^{i}_{k})^{-1}H_{k}$, and $D_{\Theta} =(\tilde R^{i}_{k})^{-1}+ Y_k$.
 Then, $P^{i+1}_{k|k}$ and $P^{i+1}_{zz,k|k}$ in (\ref{eq q(x,z)}) can be calculated as
\begin{equation}\label{eq Pkk}
\begin{aligned}
P^{i+1}_{k|k}
&= (A_{\Theta} + B_{\Theta}D^{-1}_{\Theta}C_{\Theta})^{-1}\\
& = ((\tilde P^{i}_{k|k-1})^{-1}+H^T_{k}((\tilde R^{i}_{k})^{-1}-(\tilde R^{i}_{k})^{-1}((\tilde R^{i}_{k})^{-1}\\
&~~~~+Y_k)^{-1}(\tilde R^{i}_{k})^{-1}    )H_{k})^{-1} \\
& = ((\tilde P^{i}_{k|k-1})^{-1}+H^T_{k}(\tilde R^{i}_{k}
+ Y^{-1}_k)^{-1}H_k)^{-1}\\
&=\tilde P^{i}_{k|k-1}-\tilde P^{i}_{k|k-1}H^T_k(H_k\tilde P^{i}_{k|k-1}H^T_k\\
&~~~~+\tilde R^{i}_k+Y^{-1}_k)^{-1}H_k\tilde P^{i}_{k|k-1},
\end{aligned}
\end{equation}
and
\begin{equation}\label{eq Pzzkk}
\begin{aligned}
 P^{i+1}_{zz,k|k}&=  ((\tilde R^{i}_k)^{-1}-(\tilde R^{i}_k)^{-1}H_k((\tilde P^{i}_{k|k-1})^{-1}\\
 &~~~~+ H^T_k(\tilde R^{i}_k)^{-1}H_k)^{-1}H^T_k(\tilde R^{i}_k)^{-1}+Y_k )^{-1}\\
&=((H_k\tilde P^{i}_{k|k-1}H^T_k+\tilde R^{i}_k)^{-1}+Y_{k})^{-1},
\end{aligned}
\end{equation}
respectively.

\vspace{4pt}
 Further, additional mathematical  operations are performed to derive a concise form of $P^{i+1}_{xz,k|k}$ as follows:
\begin{equation}\label{eq Pxzkk}
\begin{aligned}
P^{i+1}_{xz,k|k}&=((\tilde P^{i}_{k|k-1})^{-1} + H^T_k(\tilde R^{i}_{k})^{-1}H_k)^{-1}H^T_k(\tilde R^{i}_{k})^{-1}\\
&~~~~\times((H_k\tilde P^{i}_{k|k-1}H^T_k+\tilde R^{i}_{k})^{-1}+Y_k)^{-1}\\
& = (\tilde P^{i}_{k|k-1}-\tilde P^{i}_{k|k-1}H^T_k(\tilde R^{i}_{k}+H_k\tilde P^{i}_{k|k-1}H^T_k)^{-1}H_k\\
&~~~~\times\tilde P^{i}_{k|k-1}) H^T_k(\tilde R^{i}_{k})^{-1}(\tilde R^{i}_{k}+H_k\tilde P^{i}_{k|k-1}H^T_k)\\
&~~~~\times(I+Y_k(\tilde R^{i}_{k}+H_k\tilde P^{i}_{k|k-1}H^T_k))^{-1}\\
& =\tilde P^{i}_{k|k-1}H^T_k(I-(\tilde R^{i}_{k}+H_k\tilde P^{i}_{k|k-1}H^T_k)^{-1}H_k\\
& ~~~~\times \tilde P^{i}_{k|k-1}H^T_k)(\tilde R^{i}_{k})^{-1}(\tilde R^{i}_{k}+H_k\tilde P^{i}_{k|k-1}H^T_k)\\
& ~~~~\times(I+Y_k(\tilde R^{i}_{k}+H_k\tilde P^{i}_{k|k-1}H^T_k))^{-1}\\
& =\tilde P^{i}_{k|k-1}H^T_k(I+Y_k(\tilde R^{i}_{k}+H_k\tilde P^{i}_{k|k-1}H^T_k))^{-1}.
\end{aligned}
\end{equation}
Now, all elements in $\Theta^{i+1}_{k}$ have been obtained. Here, it can be observed that   $\Theta^{i+1}_{k}$ has a similar form  with that in \cite{han2015stochastic}. Hence, from (\ref{eq q(x,z)}), $\phi_k$ follows the Gaussian distribution with mean vector $\hat x^{i+1}_{k|k} = \hat x_{k|k-1}, \hat z^{i+1}_{k|k} = H_k\hat x_{k|k-1}$ and covariance $\Theta^{i+1}_k$.

\vspace{4pt}
\subsubsection{The update of $P_{k|k-1}$}
Let $\theta = P_{k|k-1}$, so that
\begin{equation}\label{eq q(P)}
\begin{aligned}
&~~~~\text{log} ~ q^{i+1}(P_{k|k-1}) \\
&=  -0.5\text{log}(|P_{k|k-1}|)-0.5E^{i+1}\{ \text{tr}((\rho_{k|k-1}\rho^T_{k|k-1} )_{xx}P^{-1}_{k|k-1})\}\\
&~~~~+\sum_{j=1}^M E^{i}\{\lambda_{j,k}\} \bigg[ -0.5\text{tr}(\hat G_{j,k|k-1}P^{-1}_{k|k-1})\\
&~~~~-0.5(\hat g_{j,k|k-1}+n+1)\text{log}(|P_{k|k-1}|)\bigg]+c_P\\
& = -0.5\bigg[\sum_{j=1}^ME^{i}\{\lambda_{j,k}\}\hat g_{j,k|k-1}+n+2\bigg] \text{log}(|P_{k|k-1}|)\\
&~~~~-0.5\text{tr}\bigg[\bigg(A^{i+1}_k+ \sum_{j=1}^ME^i\{\lambda_{j,k}\}\hat G_{k|k-1}\bigg)P^{-1}_{k|k-1}\bigg]+c_P,
\end{aligned}
\end{equation}
where  $c_{P}$ is a constant independent of $P_{k|k-1}$, and
\begin{equation*}
\begin{aligned}
A^{i+1}_k&=E^{i+1}\{ (\rho_{k|k-1}\rho^T_{k|k-1} )_{xx}\}\\
&=P^{i+1}_{k|k}+(\hat x^{i+1}_{k|k}-\hat x_{k|k-1})(\hat x^{i+1}_{k|k}-\hat x_{k|k-1})^T.
\end{aligned}
\end{equation*}
Based on (\ref{eq q(P)}), it can be seen that $P_{k|k-1}$ is the inverse Wishart distribution $\text{IW}(P_{k|k-1}|\hat g^{i+1}_{k|k},\hat G^{i+1}_{k|k})$, where
\begin{equation*}
\begin{aligned}
&\hat g^{i+1}_{k|k}= \sum_{j=1}^ME^i\{\lambda_{j,k}\}\hat g_{j,k|k-1}+1
\end{aligned}
\end{equation*}
and
\begin{equation*}
\begin{aligned}
\hat G^{i+1}_{k|k}= \sum_{j=1}^ME^i\{\lambda_{j,k}\}\hat G_{j,k|k-1}+A^{i+1}_k,
\end{aligned}
\end{equation*}
respectively. Now, according to the property of the inverse Wishart distribution \cite{nydick2012wishart}, $P^{-1}_{k|k-1}$ is the Wishart distribution with $E^{i+1}\{P^{-1}_{k|k-1}\} = \hat g^{i+1}_{k|k} (\hat G^{i+1}_{k|k})^{-1}$, and $E^i\{ \lambda_{j,k} \} = \hat \chi^{i}_k$ will be given later.
For further iteration, define
\begin{equation*}
\begin{aligned}
\tilde P^{i+1}_{k|k-1}&= (E^{i+1}\{P^{-1}_{k|k-1} \})^{-1}  =\hat G^{i+1}_{k|k}/\hat g^{i+1}_{k|k}.
\end{aligned}
\end{equation*}

\vspace{4pt}
\subsubsection{The update of $R_k$} Let $\theta = R_{k}$, so that
\begin{equation*}
\begin{aligned}
  \text{log}~ q^{i+1}(R_{k})& = - 0.5E^{i+1}\{ \text{tr}((\rho_{k|k-1}\rho^T_{k|k-1} )_{zz}R^{-1}_{k})\}\\
& ~~~~-0.5(\hat s_{k|k-1}+m+2)\text{log}~|R_{k}|\\
&~~~~-0.5\text{tr}(\hat S_{k|k-1}R^{-1}_{k}) + c_{R}\\
& = -0.5(\hat s_{k|k-1}+m+2)\text{log}~|R_{k}|\\
&~~~~-0.5\text{tr}((\hat S_{k|k-1}+B^{i+1}_{k})R^{-1}_{k})+ c_{R},
\end{aligned}
\end{equation*}
where  $c_{R}$ is a constant independent of $R_k$, and
\begin{equation*}
\begin{aligned}
B^{i+1}_k&=E^{i+1}\{(\rho_{k|k-1}\rho^T_{k|k-1} )_{zz}\}\\
&=E^{i+1}\{H_k(x_k-\hat x_{k|k-1})(x_k-\hat x_{k|k-1})^TH^T_k\\
&~~~~-(z_k-\hat z_{k|k-1})(x_k-\hat x_{k|k-1})^T\\
&~~~\times H^T-H_k(x_k-\hat x_{k|k-1})(z_k-\hat z_{k|k-1}) \\
&~~~+(z_k-\hat z_{k|k-1})(z_k-\hat z_{k|k-1})^T\}\\
&= H_kP^{i+1}_{k|k}H^T_k - (H_kP^{i+1}_{xz,k|k})^T -H_kP^{i+1}_{xz,k|k}+P^{i+1}_{zz,k|k}.\\
\end{aligned}
\end{equation*}
Hence, $\hat s^{i+1}_{k|k}$ and $\hat S^{i+1}_{k|k}$ are updated as
\begin{equation*}
\begin{aligned}
& \hat s^{i+1}_{k|k} = \hat s_{k|k-1}+1\\
& \hat S^{i+1}_{k|k} = \hat S_{k|k-1} + B^{i+1}_k.
\end{aligned}
\end{equation*}
For further iteration, define
\begin{equation*}
\begin{aligned}
(\tilde R^{i+1}_{k})^{-1}&=   E^{i+1}\{R^{-1}_{k} \} = \hat s^{i+1}_{k|k}(\hat S^{i+1}_{k|k})^{-1}.
\end{aligned}
\end{equation*}

\vspace{4pt}
\subsubsection{The update of $\lambda_k$ and $\mu_{k}$}Let $\theta = \lambda_{k}$, and it follows from (\ref{eq log pLambda gamma}) that
\begin{equation}
\begin{aligned}
\text{log} ~ q^{i+1}(\lambda_{k})=&\sum_{j=1}^M \lambda_{j,k} \tau^{i+1}_{j,k}+\sum_{j=1}^M  \lambda_{j,k} E^i\{ \text{log}\mu_{j,k}\}+c_{\lambda},
\end{aligned}
\end{equation}
where $c_{\lambda}$ is a constant independent of $\lambda_{k}$, and
\begin{equation*}
\begin{aligned}
\tau^{i+1}_{j,k}=&0.5 \hat g_{j,k|k-1} \text{log}(|\hat G_{j,k|k-1}|)-0.5\text{tr}(\hat G_{j,k|k-1}E^{i+1}\{P^{-1}_{k|k-1}\}\\
&-0.5(\hat g_{j,k|k-1}+n+1)E^{i+1}\{ \text{log}(|P_{k|k-1}|)\}\\
&-0.5(n\hat g_{j,k|k-1})\text{log}2-\text{log}\Gamma_n(\hat g_{j,k|k-1}/2),
\end{aligned}
\end{equation*}
\begin{equation*}
\begin{aligned}
E^{i+1}\{ \text{log}(|P_{k|k-1}|)\}=\text{log}(\hat G^{i+1}_{k|k})-n\text{log}2-\psi_n(0.5\hat g^{i+1}_{k|k}),
\end{aligned}
\end{equation*}
and
\begin{equation*}
\begin{aligned}
&E^{i}\{ \text{log}\mu_{j,k}\}=\psi( \alpha^{i}_{j,k|k})-\psi\bigg(\sum_{j=1}^{M}\alpha^{i}_{j,k|k}\bigg),
\end{aligned}
\end{equation*}
in which $\psi(\cdot)$ is the digamma function.

\vspace{4pt}
Define a new variable
\begin{equation}\label{eq chij}
\begin{aligned}
\chi^{i+1}_{j,k}=& \text{exp}(\tau^{i+1}_{j,k} + E^i\{ \text{log}\mu_{j,k}\}).
\end{aligned}
\end{equation}
Based on the categorical distribution, $\hat \chi^{i+1}_{k}$ is updated by
\begin{equation*}
\begin{aligned}
\hat \chi^{i+1}_{k}&= \chi^{i+1}_{k}/\sum_{j=1}^{M}\chi^{i+1}_{j,k}.
\end{aligned}
\end{equation*}
Similarly, according to the Dirichlet distribution, $ \alpha^{i+1}_{k|k}$ is updated by
\begin{equation*}
\begin{aligned}
& \alpha^{i+1}_{k|k}= \alpha_{k|k-1} + E^{i+1}\{\lambda_k\},
\end{aligned}
\end{equation*}
where $E^{i+1}\{\lambda_k\} = \hat\chi^{i+1}_k$.

\subsection{Variational Approximation: $\gamma_k = 1$}
This subsection studies the case where $z_k$ is transmitted by the sensor at step $k$, i.e., $\gamma_k = 1$. The following unknown parameters will be jointly estimated:
\begin{equation}
\begin{aligned}
\Lambda_1 \triangleq \{x_k,P_{k|k-1},R_k,\lambda_k,\mu_k\}.
\end{aligned}
\end{equation}
The joint pdf $p(\Lambda_1,z_k,\gamma_k)$ is factorized as
\begin{equation*}
\begin{aligned}
p(\Lambda_1,z_k,\gamma_k|\mathcal{I}_{k})=&p(\gamma_k|z_k, x_k, Y_k,\mathcal{I}_{k-1})
 p(x_k|P_{k|k-1},\mathcal{I}_{k-1}) \\ &\times p(z_k|x_k, R_k, \mathcal{I}_{k-1}) p(P_{k|k-1}|\lambda_k)\\
 &\times p(\lambda_k|\mu_k)p(\mu_k)
p(R_k),
\end{aligned}
\end{equation*}
where $ p(\gamma_k=1|z_k, x_k, Y_k,\mathcal{I}_{k-1})= 1- \text{exp}(-\frac{1}{2}e^T_kY_ke_k)$.

\vspace{4pt}
The techniques are similar to that in the above subsection, and the results about  $P_{k|k-1}, \lambda_k, \mu_k$ are the same as that for the case  of $\gamma_k = 0$ situation. Here, the results about $x_k$ and $R_k$ are given.

\vspace{4pt}
\subsubsection{The update of $x_k$} Let $\theta = x_k$, and $\hat x^{i+1}_{k|k}$ is updated by
\begin{equation}\label{eq gamma1 x update}
\begin{aligned}
&\hat x^{i+1}_{k|k}=\hat x_{k|k-1} +K^{i+1}_k(z_k-H_k\hat x_{k|k-1})\\
& K^{i+1}_{k} = \tilde P^i_{k|k-1}H^T_k(H_k\tilde P^{i}_{k|k}H_k+\tilde R^{i}_k)^{-1}\\
& P^{i+1}_{k|k}=\tilde P^{i}_{k|k-1}-\tilde P^{i}_{k|k-1}H^T_k(H_k\tilde P^{i}_{k|k-1}H^T_k+\tilde R^{i}_k)^{-1}H_k\tilde P^{i}_{k|k-1}.\\
\end{aligned}
\end{equation}

\vspace{4pt}
\subsubsection{The update of $R_k$} Let $\theta = R_{k}$, and $\hat s^{i+1}_{k|k}$ and $\hat S^{i+1}_{k|k}$ are updated by
\begin{equation*}
\begin{aligned}
& \hat s^{i+1}_{k|k} = \hat s_{k|k-1}+1\\
& \hat S^{i+1}_{k|k} = \hat S_{k|k-1}+B^{i+1}_k,
\end{aligned}
\end{equation*}
where
\begin{equation*}
\begin{aligned}
B^{i+1}_k=(z_k-H_k\hat x^{i+1}_{k|k})(z_k-H_k\hat x^{i+1}_{k|k})^T+H_kP^{i+1}_{k|k}H^T_k.
\end{aligned}
\end{equation*}

\subsection{Prior Parameters}
This subsection provides a designing method for the prior parameters $\alpha_{k|k-1}$, $\hat s_{k|k-1}$, and $\hat S_{k|k-1}$.
Similarly to \cite{huang2017novel}, the prior parameters $\alpha_{k|k-1}$, $\hat s_{k|k-1}$, and $\hat S_{k|k-1}$ are chosen as
\begin{equation}\label{eq rho prior}
\begin{aligned}
&\alpha_{k|k-1} = \rho \alpha_{k-1|k-1},~~~~\hat s_{k|k-1}=\rho\hat s_{k-1|k-1},\\
&\hat S_{k|k-1} = \rho \hat S_{k-1|k-1},
\end{aligned}
\end{equation}
where $\rho$ is the forgetting factor over $(0,1]$.

\vspace{4pt}
Before the iteration steps, some parameters are initialized as
\begin{equation}\label{eq initial}
\begin{aligned}
& \hat x^0_{k|k}~=~\hat x_{k|k-1},
~~~~~~~~~~~~~~~~~~\hat \chi^0_{k}~~=  \alpha_{k|k-1}/\hat \alpha_{k|k-1}, \\
&\hat g^{0}_{k|k-1}= \sum_{j=1}^M\hat \chi^0_{j,k}\hat g_{j,k|k-1},
~~~\hat G^{0}_{k|k-1}= \sum_{j=1}^M\hat \chi^0_{j,k}\hat G_{j,k|k-1},\\
&\tilde P^0_{k|k-1}= \hat G^{0}_{k|k-1}/\hat g^{0}_{k|k-1}, ~~~~~~\tilde R^0_{k}~~~ = ~~~\hat S^0_{k|k-1}/\hat s^0_{k|k-1}.
\end{aligned}
\end{equation}

\vspace{4pt}
Now, based on the above results, the event-triggered variational Bayesian filter (\textbf{ETVBF}) is summarized as Algorithm \ref{algorithm event-triger}.
It is observed that only $x_k$ and $R_k$ are directly affected, when the measurement information is not received by the estimator. Then, other parameters are influenced  by $x_k$ and $R_k$.

\begin{algorithm}[t]
\caption{Event-triggered Variational Bayesian Filter (ETVBF)}
\label{algorithm event-triger}
\hspace*{0.01in}{\bf Input:} $\hat x_{k-1|k-1}$, $P_{k-1|k-1}$, $\hat s_{k-1|k-1}$, $\hat S_{k-1|k-1}$, $\alpha_{k-1|k-1}$, $\hat g_{j,k|k-1}$, $\bar Q_{j,k}$, $\rho$, $\delta$, $N$.

\hspace*{0.01in}\textbf{Predicted state:}
\begin{algorithmic}
\STATE $\hat x_{k|k-1}~~=~~F_{k-1}\hat x_{k-1|k-1}$,
 \STATE $P_{j,k|k-1}
=~~F_{k-1}P_{k-1|k-1}F^T_{k-1}+\bar Q_{j,k}$,
\STATE $\hat G_{j, k|k-1} = ~~\hat g_{j,k|k-1} P_{j,k|k-1}$.
\end{algorithmic}

\hspace*{0.01in}\textbf{Updated state:}
\begin{algorithmic}
\STATE\hspace*{-0.04in}\textbf{Initialization:} \STATE Initialized as (\ref{eq rho prior}) and (\ref{eq initial}).
\STATE\hspace*{-0.04in}\textbf{For $i=0:N-1$}
\STATE \hspace*{-0.03in}\textbf{Update $x,z$:}
\STATE \hspace*{0.03in}\textbf{If} $\gamma_k = 0$,
\STATE \hspace*{0.03in}$\hat x^{i+1}_{k|k} = \hat x_{k|k-1}$,~~~$\hat z^{i+1}_{k|k} = H_k\hat x_{k|k-1}$,
\STATE \hspace*{0.03in}$P^{i+1}_{k|k}$,  $P^{i+1}_{zz,k|k}$, and $P^{i+1}_{xz,k|k}$ are calculated as (\ref{eq Pkk}), (\ref{eq Pzzkk}), and(\ref{eq Pxzkk}), respectively.

\STATE \hspace*{0.03in}\textbf{Else if} $\gamma_k = 1$,
\STATE \hspace*{0.03in}$K^{i+1}_{k}$, $\hat x^{i+1}_{k|k}$, and $P^{i+1}_{k|k}$ are updated  as (\ref{eq gamma1 x update}).
\STATE \hspace*{0.03in}\textbf{End}

\STATE \hspace*{-0.03in}\textbf{Update} $\text{IW}(P_{k|k-1}|\hat g^{i+1}_{k|k},\hat G^{i+1}_{k|k})$:

\STATE \hspace*{0.03in}$\hat g^{i+1}_{k|k}= \sum_{j=1}^M\hat\chi^i_{j,k}\hat g_{j,k|k-1}+1$,
\STATE \hspace*{0.03in}$\hat G^{i+1}_{k|k}= \sum_{j=1}^M\hat\chi^i_{j,k}\hat G_{j,k|k-1}+A^{i+1}_k$,
\STATE \hspace*{0.03in}$A^{i+1}_k=P^{i+1}_{k|k}+(\hat x^{i+1}_{k|k}-\hat x_{k|k-1})(\hat x^{i+1}_{k|k}-\hat x_{k|k-1})^T$.

\STATE \hspace*{-0.03in}\textbf{Update} $\text{IW}(R_k|\hat s_{k|k-1},\hat S_{k|k-1})$:
\STATE \hspace*{0.03in} $\hat s^{i+1}_{k|k-1}= \hat s_{k|k-1} + 1$, ~~~$\hat S^{i+1}_{k|k-1} = \hat S_{k|k-1}+B^{i+1}_{k}$,
\STATE \hspace*{0.03in}\textbf{If} $\gamma_k = 0$, \STATE \hspace*{0.03in}$B^{i+1}_k= H_kP^{i+1}_{k|k}H^T_k - (H_kP^{i+1}_{xz,k|k})^T - H_kP^{i+1}_{xz,k|k}+P^{i+1}_{zz,k|k}$,
\STATE \hspace*{0.03in}\textbf{Else if} $\gamma_k = 1$,
\STATE \hspace*{0.03in}$B^{i+1}_k=(z_k-H_k\hat x^{i+1}_{k|k})(z_k-H_k\hat x^{i+1}_{k|k})^T+H_kP^{i+1}_{k|k}H^T_k$.
\STATE \hspace*{0.03in}\textbf{End}

\STATE \hspace*{-0.03in}\textbf{Update $\lambda_k$:}
\STATE \hspace*{0.03in}$E^{i+1}\{ \text{log}\mu_{j,k}\}=\psi( \alpha^{i}_{j,k|k})-\psi(\sum_{j=1}^{M}\alpha^{i}_{j,k|k})$,
\STATE \hspace*{0.03in}$E^{i+1}\{ \text{log}(|P_{k|k-1}|)\}=\text{log}(\hat G^{i+1}_{k|k})-n\text{log}2-\psi(0.5\hat g^{i+1}_{k|k}) $,
\STATE \hspace*{0.03in}$\chi^{i+1}_{j,k}$ is computed as (\ref{eq chij}),
\STATE \hspace*{0.03in}
$\hat\chi^{i+1}_{k}= \chi^{i+1}_{k}/\sum_{j=1}^{M}\chi^{i+1}_{j,k}$.

\STATE \hspace*{-0.03in}\textbf{Update $\mu_k$:}
\STATE \hspace*{0.03in}$\alpha^{i+1}_{k|k}= \alpha_{k|k-1} + \hat \chi^{i+1}_{k}$.

\STATE\hspace*{-0.03in}\textbf{If} $||\hat x^{i+1}_{k|k}-\hat x^i_{k|k}||/ ||\hat x^i_{k|k}||\leq \delta,$ terminate iteration.

\STATE  \hspace*{-0.04in}\textbf{end}
\end{algorithmic}

\hspace*{0.01in}\textbf{Output:}
\hspace*{0.03in}$\hat x_{k|k} = \hat x^{i+1}_{k|k}$, $P_{k|k}=P^{i+1}_{k|k}$, $\hat s_{k|k}=\hat s^{i+1}_{k|k}$, $\hat S_{k|k}=\hat S^{i+1}_{k|k}$, $\alpha_{k|k}=\alpha^{i+1}_{k|k}$.
\end{algorithm}

\subsection{Discussions}
In the algorithm, there exist several parameters that need to be selected and designed, i.e., $\hat g_{j,k|k-1}$, $\bar Q_{j,k}$, $\alpha_{0|0}$,  $\rho$, $R_0$, $\hat s_{0|0}$, $N$, and $\delta$. In the following, the influence of these parameters, the selections of these parameters, and the performances are discussed.

\vspace{4pt}
\subsubsection{The influence of parameters}
~

For parameters $\hat g_{j,k|k-1}$ and $\bar Q_{j.k}$,
based on Algorithm \ref{algorithm event-triger}, $\tilde P^{i+1}_{k|k-1}$ can be formulated as
\begin{equation}
\begin{aligned}
\tilde P^{i+1}_{k|k-1}
&= \frac{\sum_{j=1}^M\hat\chi^i_{j,k}\hat g_{j,k|k-1}P_{j,k|k-1}+A^{i+1}_k}{\sum_{j=1}^M\hat\chi^i_{j,k}\hat g_{j,k|k-1}+1},
\end{aligned}
\end{equation}
where  $P_{j,k|k-1}=F_{k-1}P_{k-1|k-1}F^T_{k-1}+\bar Q_{j,k}$.

\vspace{4pt}
 a) $\hat g_{j,k|k-1}$: It can be observed that $\tilde P^{i+1}_{k|k-1}$ is a weighted sum of $P_{j,k|k-1}, j\in M$ and $A^{i+1}_k$
by  $\hat\chi^i_{j,k}\hat g_{j,k|k-1}$ and $1$. The bigger the $\hat g_{j,k|k-1}$ is, the more the prior information $P_{j,k|k-1}$ is introduced into $\tilde P^{i+1}_{k|k-1}$.

b) $\bar Q_{j,k}$: When $\hat g_{j,k|k-1}, j\in M$, are set as the same value $\tilde g_{k|k-1}$,
 $\sum_{j=1}^M\hat\chi^i_{j,k}\hat g_{j,k|k-1}P_{j,k|k-1}=\tilde g_{k|k-1}F_{k-1}P_{k-1|k-1}F^T_{k-1}+\tilde g_{k|k-1}\sum_{j=1}^M\hat\chi^i_{j,k}\bar Q_{j,k}$.
The adaptive parameter
$\hat\chi^i_{j,k}$ is subject to $\sum_{j=1}^M\hat\chi^i_{j,k}=1$, $0\leq \hat\chi^i_{j,k} \leq 1$, which means that a  convex combination of
multiple nominal process noise covariances $\bar Q_{j,k}, j\in M$,  is adaptively  achieved. Multiple nominal process noise covariances are characterized by a range of the true covariance, and
 their accuracy requirements are reduced compared to the  single covariance case.
 This condition is easier to  satisfy and verify.

\vspace{4pt}
For parameters $\rho$, $R_0$, and $\hat s_{0|0}$, similar to \cite{huang2017novel},  it can be obtained that
\begin{equation}\label{eq tilde R analysis 1}
\begin{aligned}
\tilde R^{i+1}_{k}&= \frac{\eta(\rho,k)\hat R_{k-1|k-1}+B^{i+1}_{k}}{\eta(\rho,k) + 1},
\end{aligned}
\end{equation}
\begin{equation}\label{eq tilde R analysis 2}
\begin{aligned}
\hat R_{k-1} &=  \bigg(\prod^{k-1}_{i=1}q_i\bigg)\hat R_0 + \sum^{k-1}_{i=1}\bigg(\prod^{k-1}_{j=i+1}q_j\bigg)\tilde B_i,
\end{aligned}
\end{equation}
where $\eta(\rho, k) = \rho^{k}\hat s_{0|0} + \frac{\rho^{k}-\rho}{\rho-1}$, $\rho\in(0~1]$,
 $q_k = \frac{\eta(\rho,k)}{\eta(\rho,k) + 1}$,  $\tilde B_{k} = \frac{B^{stop}_{k}}{\eta(\rho,k) + 1}$, and $B^{stop}_{k}$ represents  $B^{i}_{k}$ at the loop termination step.

c) $\rho$:
For \textbf{ETVBF},  $B^{i+1}_{k}$ in (\ref{eq tilde R analysis 1})  is different  under different event-triggered  situations, and   $\eta(\rho, k)$  plays a role in balancing a weighted  sum of $\hat R_{k-1|k-1}$ and $B^{i+1}_{k}$. Moreover, $\eta(\rho, k)$ is a monotone increasing function of $\rho$ as $k$ approaches infinity, and $\lim \limits_{k \to \infty} \eta(\rho, k) =  \frac{-\rho}{\rho-1}$ \cite{huang2017novel}. Hence, the smaller the $\rho$ is, the more information the $B^{i+1}_k$ is introduced into $\tilde R^{i+1}_{k}$.

d) $\hat R_0$:
Based on (\ref{eq tilde R analysis 2}), it can be concluded that
the effect of the past information about $\hat R_0$ and $\tilde B_k$  exponentially decays   as the time increases.

e) $N$ and  $\delta$:
Based on the variational inference theory (\ref{eq max ELBO}) and the fixed-point iteration (\ref{eq compute KL iteration})\cite{hildebrand1987introduction}, a more accurate  approximate optimal solution can be  obtained by increasing the iteration number $N$. By setting a reasonable loop termination condition $\delta$,   the iteration can be terminated early with guaranteed  accuracy.

\vspace{4pt}
\subsubsection{The selections of parameters}
For parameters $\hat g_{j,k|k-1}$  and $\hat s_{0|0}$, they can be selected within a wide range, and it is suggested that they are tuned based on the specific system.
For $\bar Q_{j,k}$,
to obtain better performances,  it is suggested that multiple nominal process noise covariances $\bar Q_{j,k}$ are selected such that
the true process noise covariance $Q_k$ satisfies $\text{min}( \bar Q_{j,k}, j\in M) \leq Q_k\leq \text{max}( \bar Q_{j,k}, j\in M)$. For $\alpha_{0|0}$, it can be set as $1_{1\times M}$, since it will be  adaptively adjusted. For $\rho$,
the forgetting factor  is suggested to be selected as $\rho \in [0.94,~1]$ to obtain a good performance, as shown in the simulation part. For $N$ and $\delta$,  they are selected as a big number and
a sufficiently small number, respectively, to guarantee the iteration performance. It is worth mentioning that the effectiveness of such selections is verified in the simulation part.

\vspace{4pt}
\subsubsection{Performance  Analysis}
 The estimation  performance  has a strong correlation with the deviation between  the nominal  value and the true value, such as $\bar R_0$ and $R_0$.
When the deviation is smaller, the proposed algorithm has a better performance, as shown in Fig. \ref{fig yr10to300rmse} in Simulations.
 The parameters $N$ and $\delta$ have a main  influence on  the estimation performance, and they have been discussed above.
A concept ``numerical stability" from  \cite{huang2017novel}\cite{hildebrand1987introduction} is introduced to guarantee the feasibility of the proposed  algorithm, i.e., the covariance matrices  must be  positive definite  when  the algorithm is in the numerical iteration process.
Since $\bar Q_{j,k}>0$, $\hat R_0>0$, $A^{i+1}_k>0$, and $B^{i+1}_k>0$,  it follows that $\tilde P^{i+1}_{k|k-1}>0$ and $\tilde R^{i+1}_{k}>0$.  Thus,  \textbf{ETVBF} is numerically stable.

\vspace{4pt}
\begin{proposition}
For systems (\ref{eq dynamical equation})-(\ref{eq measurement equation}) under Algorithm \ref{algorithm event-triger},
the estimation error $e_{k|k}= \hat x_{k|k}-x_k$ is Gaussian.

\end{proposition}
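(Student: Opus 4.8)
The plan is to prove the claim by induction on the time index $k$, showing that the conditional law $p(x_k\mid\mathcal I_k)$ stays Gaussian; since $\hat x_{k|k}$ is its mean, this is equivalent to $e_{k|k}$ being Gaussian. For the base case, the standing assumption that $x_0$ and $\hat x_{0|0}$ obey Gaussian distributions gives that $e_{0|0}=\hat x_{0|0}-x_0$ is Gaussian. Assume now that $e_{k-1|k-1}$ is Gaussian.

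First I would treat the prediction step. Combining (\ref{eq dynamical equation}) and (\ref{eq predicted state}) yields
\begin{equation*}
e_{k|k-1}=\hat x_{k|k-1}-x_k=F_{k-1}e_{k-1|k-1}-\omega_k,
\end{equation*}
a linear combination of the Gaussian $e_{k-1|k-1}$ and the independent zero-mean Gaussian $\omega_k$; hence $e_{k|k-1}$, equivalently $p(x_k\mid\mathcal I_{k-1})$, is Gaussian.

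Next I would split the update into the two event-triggered cases. When $\gamma_k=1$, substituting (\ref{eq measurement equation}) into the update (\ref{eq gamma1 x update}) gives $e_{k|k}=(I-K_kH_k)e_{k|k-1}+K_k\nu_k$; conditioned on $\mathcal I_k$ the gain $K_k$ and the estimated covariances are deterministic, so $e_{k|k}$ is a linear image of the Gaussian $e_{k|k-1}$ and the independent Gaussian $\nu_k$, hence Gaussian. The delicate case is $\gamma_k=0$, where no measurement is received yet the event must still be conditioned upon: one forms $p(x_k\mid\mathcal I_k)\propto p(\gamma_k=0\mid x_k,\mathcal I_{k-1})\,p(x_k\mid\mathcal I_{k-1})$, and a generic constraint of this type would produce a truncated, non-Gaussian posterior.

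Here the main obstacle, and the crux of the argument, is to show that conditioning on $\{\gamma_k=0\}$ preserves Gaussianity. This rests entirely on the fact that the trigger (\ref{eq event sigma}) is the Gaussian kernel $\exp(-\tfrac{1}{2}e_k^TY_ke_k)$, so that $p(\gamma_k=0\mid z_k,\mathcal I_{k-1})$ is a Gaussian kernel in $z_k$; integrating against the Gaussian $p(z_k\mid x_k)$ keeps it Gaussian in $x_k$, and multiplying by the Gaussian prior again yields a Gaussian. This is precisely what the variational derivation makes explicit: with $P_{k|k-1}$ and $R_k$ decoupled via (\ref{eq phi})--(\ref{eq phiPhiphi}), the log-density $\log q^{i+1}(\phi_k)$ in (\ref{eq q(x,z)}) is a pure quadratic form in $\phi_k=[x_k^T,z_k^T]^T$, so $q(\phi_k)$ and its marginal $q(x_k)$ are Gaussian with mean $\hat x_{k|k}=\hat x_{k|k-1}$ and covariance $P_{k|k}$ from (\ref{eq Pkk}). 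Thus $e_{k|k}$ is Gaussian in this case too, which closes the induction and establishes the claim.
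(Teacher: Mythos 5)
Your proposal is correct, and it takes a genuinely different---in fact more careful---route than the paper's own proof. The paper argues by a bare induction on the error recursion itself: when $\gamma_k=0$, Algorithm~\ref{algorithm event-triger} outputs $\hat x_{k|k}=\hat x_{k|k-1}$, so $e_{k|k}=e_{k|k-1}$; when $\gamma_k=1$, $e_{k|k}=(I-K^{i+1}_kH_k)e_{k|k-1}+K^{i+1}_k\nu_k$; together with $e_{k|k-1}=F_ke_{k-1|k-1}-\omega_{k-1}$ and the Gaussianity of $e_{0|0}$, $\omega_{k-1}$, $\nu_k$, Gaussianity of $e_{k|k}$ follows inductively. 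That argument treats the gain as deterministic and never conditions on the triggering events at all---in particular, in the $\gamma_k=0$ case it simply identifies the updated error with the prediction error. You instead track the filtering posterior $p(x_k\mid\mathcal I_k)$ and confront exactly the point the paper passes over silently: the event $\{\gamma_k=0\}$ is informative about the innovation, and for a deterministic threshold trigger the conditioned posterior would be truncated and non-Gaussian. Your observation that the Gaussian-kernel form of (\ref{eq event sigma}) makes $p(\gamma_k=0\mid z_k,\mathcal I_{k-1})$ a Gaussian kernel in $z_k$, so that the log-density of $\phi_k$ in (\ref{eq q(x,z)}) stays purely quadratic with mean $\hat x_{k|k-1}$ and covariance given by (\ref{eq Pkk}), is precisely the property of the closed-loop stochastic trigger established in \cite{han2015stochastic}, which the paper cites when asserting joint Gaussianity of $x_k$ and $z_k$ but does not rehearse in its proof. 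What each approach buys: the paper's induction is shorter and elementary, but it only establishes Gaussianity of the error propagated along the trigger pattern; yours establishes the stronger conditional statement (Gaussianity of $p(x_k\mid\mathcal I_k)$ including the no-send events) that the algorithm's update equations actually rely on. One caveat applies to both proofs equally and so is not a gap relative to the paper's standard of rigor: the quantities $K^{i+1}_k$, $\tilde P^{i}_{k|k-1}$, $\tilde R^{i}_k$ produced by the variational iterations are data-dependent, so strictly speaking Gaussianity holds conditionally on them; neither you nor the paper addresses this, and your appeal to the variational quadratic form inherits the same idealization.
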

\begin{proof}
In Algorithm \ref{algorithm event-triger}, denote the measurement update error and  the predicted error as $e_{k|k}= \hat x_{k|k}-x_k$ and $e_{k|k-1}=\hat x_{k|k-1}-x_k$, respectively.
When $\gamma_k=0$, one has $e_{k|k}= \hat x_{k|k-1}-x_k=e_{k|k-1}$.
When $\gamma_k =1$, one has $e_{k|k}=(I-K^{i+1}_kH_k)(\hat x_{k|k-1}-x_k) + K^{i+1}_{k}\nu_k=(I-K^{i+1}_kH_k)e_{k|k-1}+ K^{i+1}_{k}\nu_k$. Additionally, $e_{k|k-1}=F_k\hat x_{k-1|k-1}-F_kx_{k-1}-\omega_{k-1}=F_ke_{k-1|k-1}-\omega_{k-1}$.
 Due to the Gaussianity of $e_{0|0}=\hat x_{0|0}-x_0$, $\omega_{k-1}$, and $\nu_k$, the error $e_{k|k}$  is Gaussian as can be verified inductively.
\end{proof}

\section{Simulations}
 This section shows some simulations and  comparison with existing algorithms to verify the effectiveness  of the proposed {\bf ETVBF}.

\vspace{4pt}
A tracking problem for a vehicle is considered as in \cite{huang2017novel}. The dynamical system is described by (\ref{eq dynamical equation}) and (\ref{eq measurement equation}) with system matrices
 \begin{equation}
\begin{aligned}
&F_k = \left[\begin{array}{cc}
I_2  & TI_2  \\
0  & I_2  \\
\end{array} \right]
\end{aligned}
\end{equation}
 and
  \begin{equation}
\begin{aligned}
&H_k = \left[\begin{array}{cccc}
1  & 0 & 0 & 0 \\
0  & 1 & 0 & 0 \\
\end{array} \right],
\end{aligned}
\end{equation}
 respectively,  where $T=1$s. The state dimension  and the measurement dimension are $n=4$ and $m=2$, respectively.
The true  process  and the true measurement noise covariances are set as
\begin{equation*}
\begin{aligned}
&Q_k = (6+0.5\text{cos}((\pi k)/{T_f}))\left[\begin{array}{cccc}
T^3/3 & 0 & T^2/2 & 0 \\
0 & T^3/3 & 0 & T^2/2 \\
T^2/2 & 0 & T & 0 \\
0 & T^2/2 & 0 & T\\
\end{array} \right]
\end{aligned}
\end{equation*}
and
\begin{equation*}
\begin{aligned}
&R_k = (100+50\text{cos}((\pi k)/T_f))\left[\begin{array}{cc}
1  & 0.5  \\
0.5  & 1  \\
\end{array} \right],
\end{aligned}
\end{equation*}
respectively, where $T_f =500$.

\vspace{4pt}
In this section, the performances of \textbf{ETVBF} in Algorithm \ref{algorithm event-triger},  a closed-loop stochastic event-triggered Kalman filter (\textbf{CLSET-KF}) in \cite{han2015stochastic}, and a variational Bayesian filter (\textbf{VBF}, without the triggering mechanism in Algorithm 1),  a variational Bayesian based adaptive Kalman filter (\textbf{VBAKF}) in \cite{huang2017novel}, and a variational Bayesian adaptive Kalman filter with Gaussian-Gamma mixture (\textbf{VBAKF-GGM}) in \cite{zhu2021adaptive} are compared.

\vspace{4pt}
The nominal initial measurement  noise covariance is set as  $\bar R_0 = rI_2$, and the event-triggered parameter is set as $Y_k=yI_2$, where $r$ and $y$ are the noise scale factor and the event-triggered scale factor, respectively. The step internal $N_{step}$ of $k$ and  the total iteration number $N$ are chosen  as $150$ and $50$, respectively.
The initial state is $x_0= [100, 100, 10, 10]^T$, and the initial estimation error covariance is $\hat P_{0|0}= 100I_4$. Then, the initial estimate state is given by $\hat x_{0|0} \sim N(x_0, \hat P_{0|0})$.
The dof parameters  are  set as $\hat g_{k|k} = 10\times 1_{1\times M}$ and $\hat s_{0|0} = 5$.
 The forgetting factor and the initial concentration parameter  are selected as
$\rho = 0.997$ and
$\hat \alpha_0 = 1_{1\times 5}$, respectively. For every case,
Monte Carlo simulation experiments with $N_{MC}=500$ are performed.
The mixture term is selected as $M = 5$, and  the nominal process  noise covariances are set as $\bar Q_{k,1} = I_4, \bar Q_{k,2} = 2I_4, \bar Q_{k,3} = 3I_4, \bar Q_{k,4} = 9I_4, \bar Q_{k,5} = 10I_4$.
 For \textbf{VBAKF-GGW}, the shape parameters are set as $a_0=e_0=[10, 10, 10, 10]$, and  the rate parameters are set as $b_0=f_0=[10, 100, 1000, 10000]$.
For \textbf{CLSET-KF}, \textbf{VBAKF}, and \textbf{VBAKF-GGW}, the nominal process noise covariance is set as  $\bar Q_k = 4I_4$.

\vspace{4pt}
The root-mean-square error (RMSE) is utilized to evaluate the performance of the algorithm:
\begin{equation*}
\text{RMSE}=\sqrt{\frac{1}{nN_{MC}N_{step}}\sum^{N_{MC}}_{k=1}\sum^{N_{step}}_{j=1}\sum^{n}_{l=1}(\hat x_{k|k,j}(l)-x_{k,j}(l))^2 },
\end{equation*}
where $n$, $N_{MC}$, $N_{step}$ denote  the dimension of the state, the total Monte Carlo experiment number, and the step internal, respectively, and
$\hat x_{k|k,j}(l)$  and  $x_{k,j}(l)$  represent the $l$-th component  of the vector $\hat x_{k|k}$ and the vector $x_k$  at the $k$ step in the $j$-th trail experiment, respectively.
Similarly, the communication rate $\gamma_{trail}$ is defined as
\begin{equation*}
\gamma_{trail}=\sqrt{\frac{1}{N_{MC}N_{step}}\sum^{N_{MC}}_{k=1}\sum^{N_{step}}_{j=1}\gamma_{k,j}},
\end{equation*}
where $\gamma_{k,j}$ denotes $\gamma_k$ of the $j$-th trail experiment at step $k$.

\vspace{4pt}
First, under the  different event-triggered  communication rates,  the estimation performances are compared among  five algorithms.  The event-triggered communication rates are dominated by $Y_k$. Hence, the event-triggered scale factor $y$ is set as $0.0005:0.0005:0.1$, and the noise covariance scale factor  is set as $r=150$.  Fig. \ref{fig yr150rmse} and  Fig. \ref{fig yr150gamma} show the RMSE of the five algorithms
and the event-triggered communication rates of \textbf{ETVBF} and \textbf{CLSET-KF}, respectively. To show the event-triggered performance, denote $t_{l}$ as the $l$-th event-triggered time instant, and $t_{l+1}-t_{l}$ as the interval between two adjacent event-triggered time instants.
Fig. \ref{fig triggeredtime}  presents the scheduling sequences of \textbf{ETVBF} and \textbf{CLSET-KF} under the event-triggered mechanism with $y=0.0005$ and $r=150$.
Fig. 5 shows the average iteration numbers of \textbf{ETVBF}, \textbf{VBF}, and \textbf{VBAKF-GGM} with $y=0.0005$ and $r=150$.

\begin{figure}[!htb]
\centering
{\includegraphics[width=3.2in]{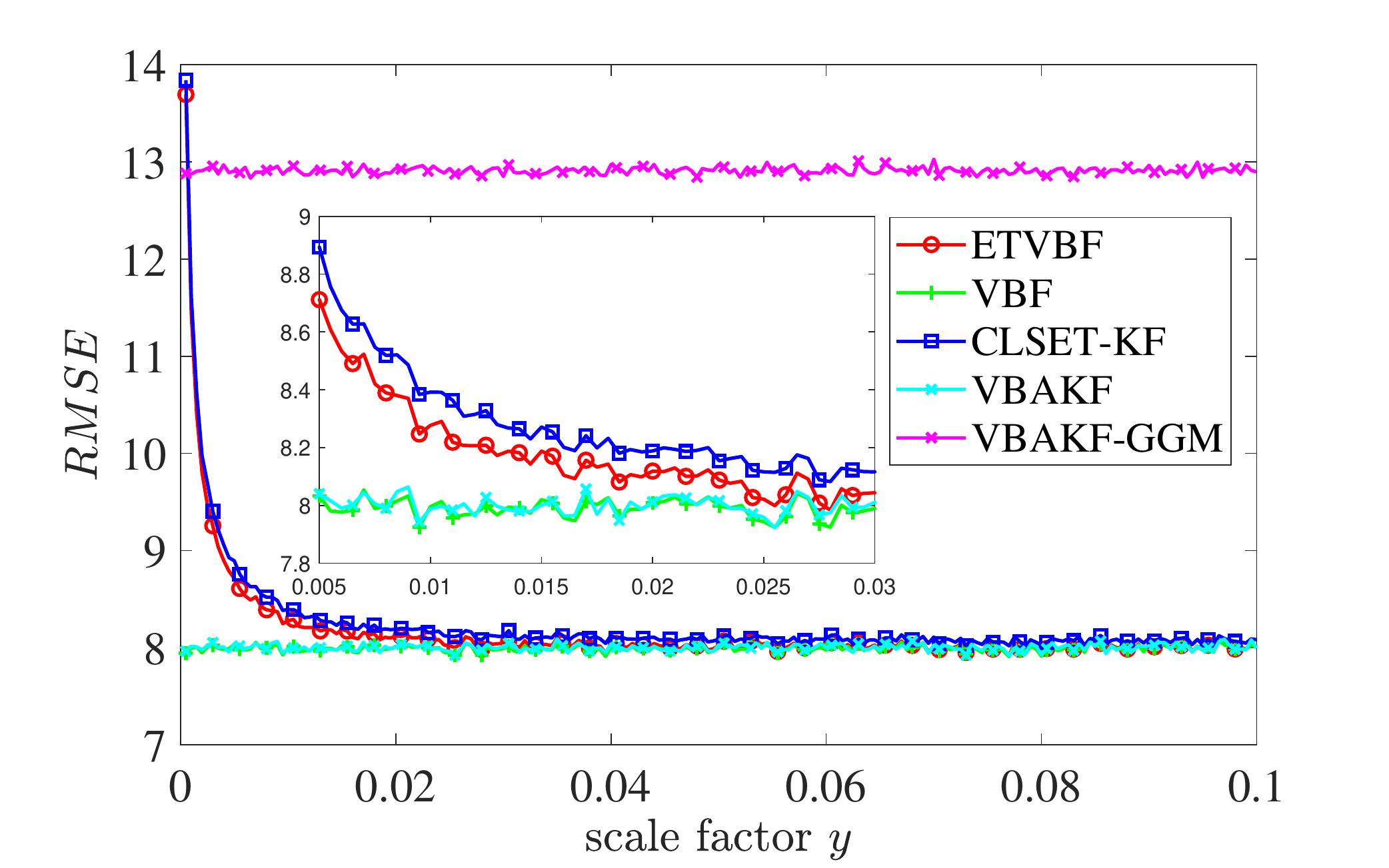}}
\caption{The RMSE of ETVBF, VBF, CLSET-KF, VBAKF, and VBAKF-GGM.}
\label{fig yr150rmse}
\end{figure}

\begin{figure}[!htb]
\centering
{\includegraphics[width=3.2in]{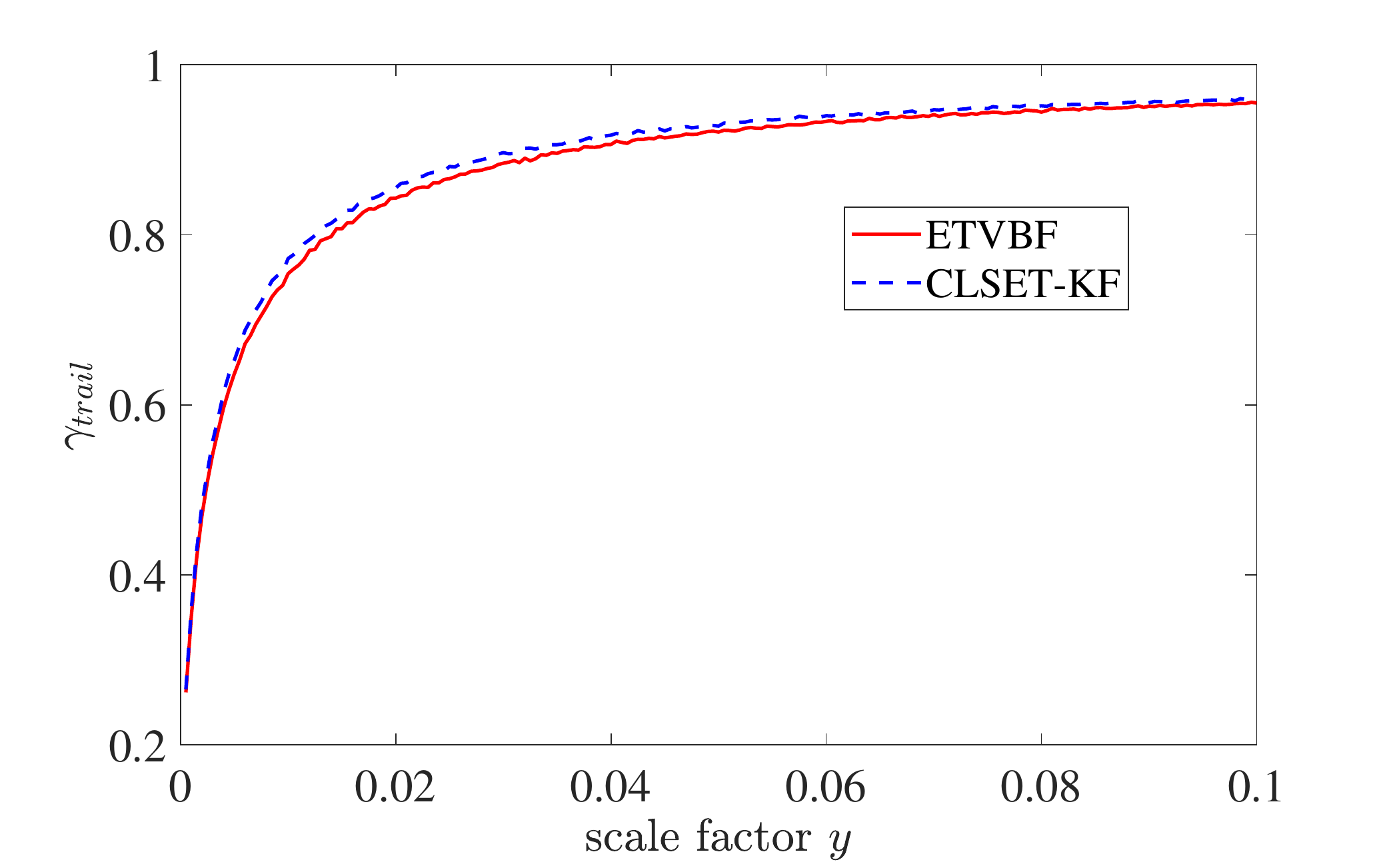}}
\caption{The communication rates of ETVBF and CLSET-KF.}
\label{fig yr150gamma}
\end{figure}

\vspace{4pt}
\begin{enumerate}
  \item
  \textbf{ETVBF} has a better performance than \textbf{CLSET-KF}  all the time, while \textbf{CLSET-KF} needs higher communication rates than \textbf{ETVBF}.
  \textbf{VBAKF-GGW} has the worst performance, since  it may be specially designed in the presence of outliers.

  \item
  As  $y$ increases, the communication rates of \textbf{ETVBF }and \textbf{CLSET-KF}  increase, and their performances  become better. \textbf{VBF} and  \textbf{VBAKF} have the best performance at a high transmission cost. As $y$ increases,  the performance of \textbf{ETVBF} approaches that of \textbf{VBF}.

  \item
  The proposed  \textbf{ETVBF} needs the least average iteration number, and \textbf{VBF} needs a little larger average iteration number than \textbf{ETVBF}. As these filters converge, the average iteration numbers decrease gradually.

\end{enumerate}

\begin{figure}[!htb]
\centering
{\includegraphics[width=3.2in]{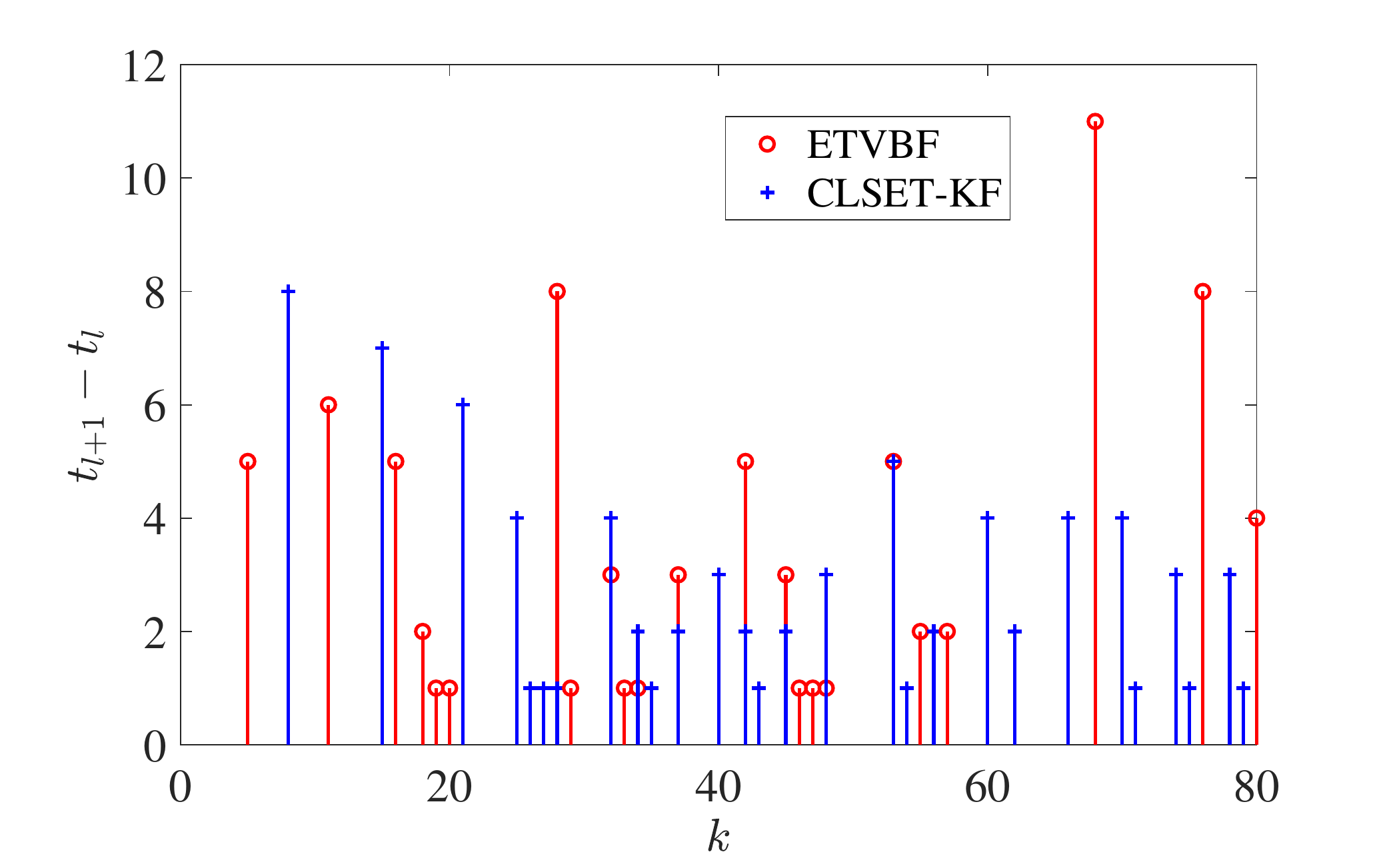}}
\caption{Scheduling sequences  of ETVBF and CLSET-KF under the event-triggered mechanism.}
\label{fig triggeredtime}
\end{figure}

\begin{figure}[!htb]
\centering
{\includegraphics[width=3.2in]{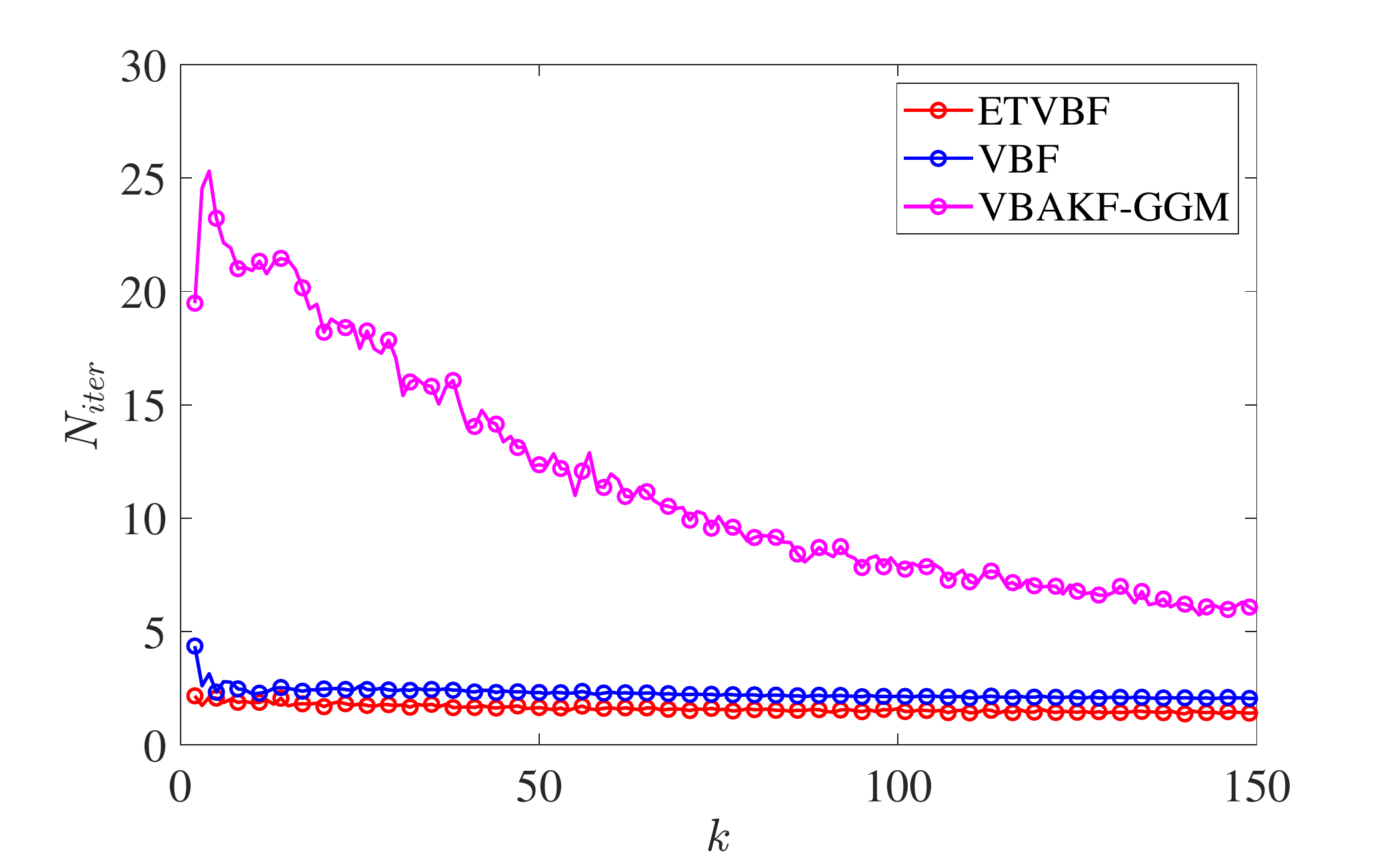}}
\caption{The average iteration numbers of ETVBF, VBF and VBAKF-GGM.}
\label{fig Niter}
\end{figure}

\vspace{4pt}
Next, with different nominal measurement noise covariances $\bar R_0$, the estimation performances are compared among the  five algorithms. Similarly, the noise scale $r$ is set as $10:10:300$, and the event-triggered scale parameter is set as $y=0.015$.
Fig. \ref{fig yr10to300rmse} and Fig. \ref{fig yr10to300gamma} illustrate the RMSE of the five algorithms and communication rates of \textbf{ETVBF} and \textbf{CLSET-KF}, respectively.
\begin{enumerate}
  \item  When the error between the nominal measurement noise covariance $rI_2$ and the true noise covariance is small, all algorithms have good performances. As the error gets larger, the performances are degraded.  \textbf{ETVBF} and \textbf{VBF} have better and stabler performances  for different noise covariances.
  \item
  As the nominal measurement noise covariance  increases, the communication rates of \textbf{ETVBF} and \textbf{CLSET-KF} increase.
\end{enumerate}

\begin{figure}[!htb]
\centering
{\includegraphics[width=3.2in]{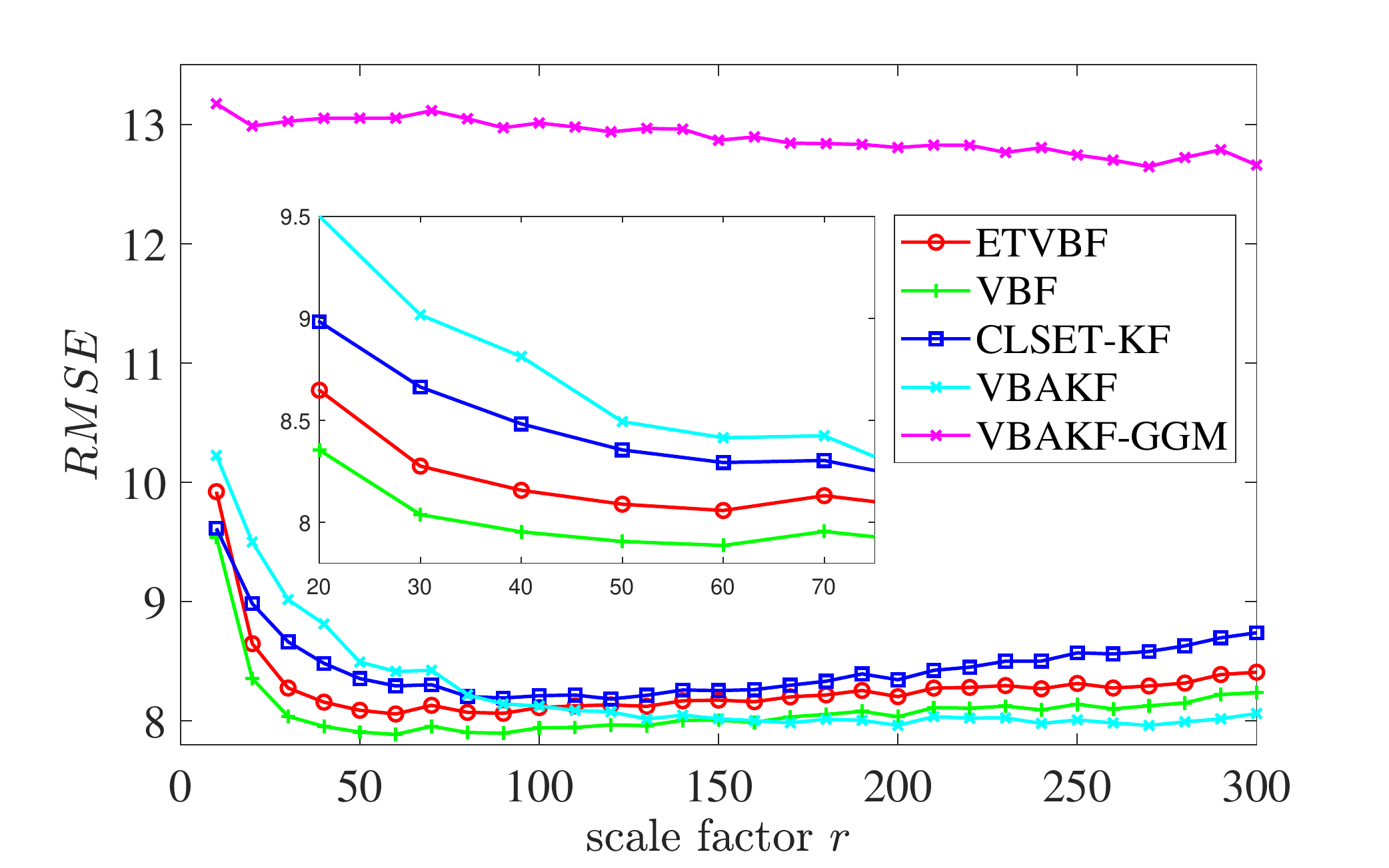}}
\caption{The RMSE of ETVBF, VBF, CLSET-KF, VBAKF, and VBAKF-GGM.}
\label{fig yr10to300rmse}
\end{figure}

\begin{figure}[!htb]
\centering
{\includegraphics[width=3.2in]{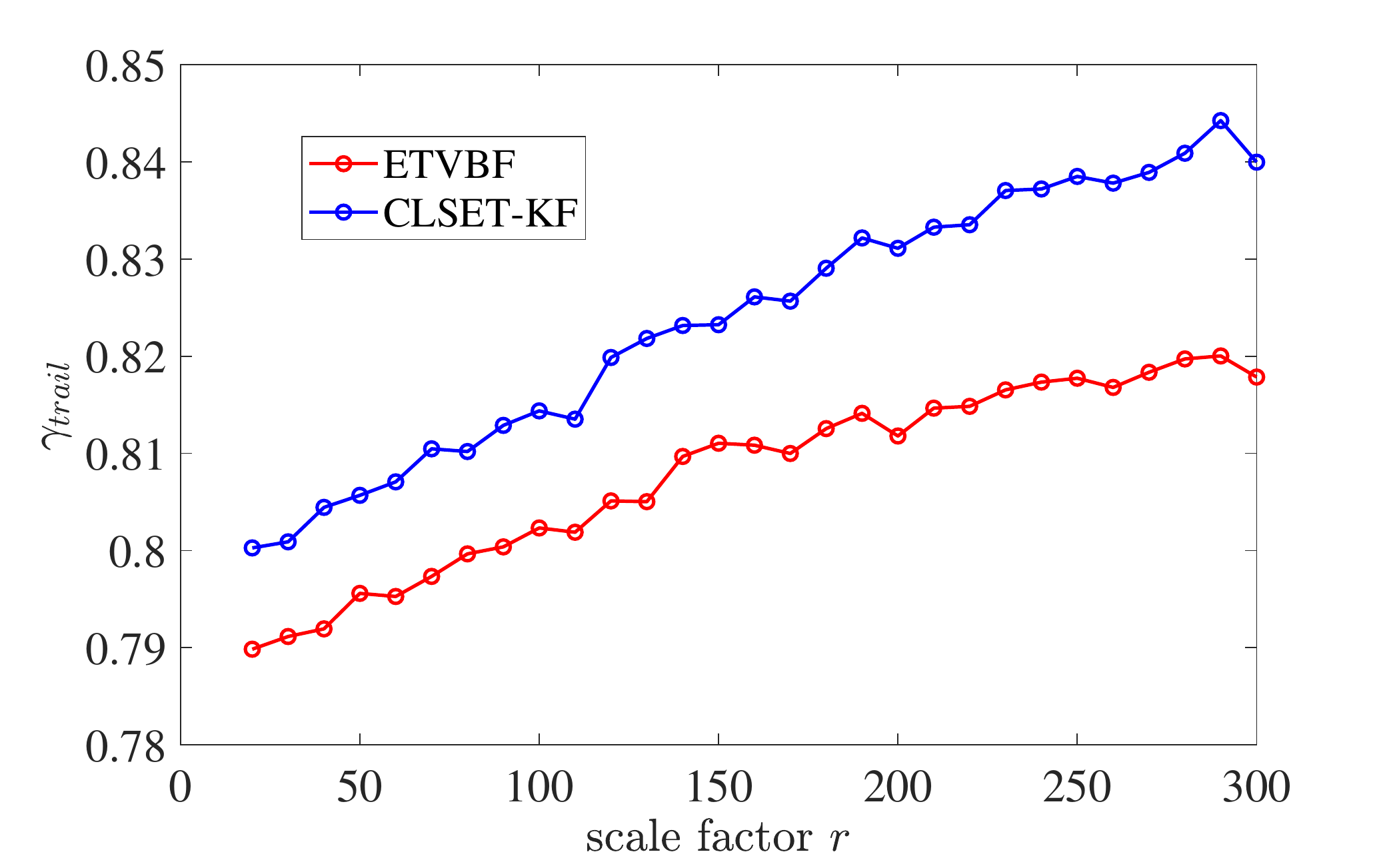}}
\caption{The communication rates of ETVBF and CLSET-KF.}
\label{fig yr10to300gamma}
\end{figure}

\begin{figure}[!htb]
\centering
{\includegraphics[width=3.2in]{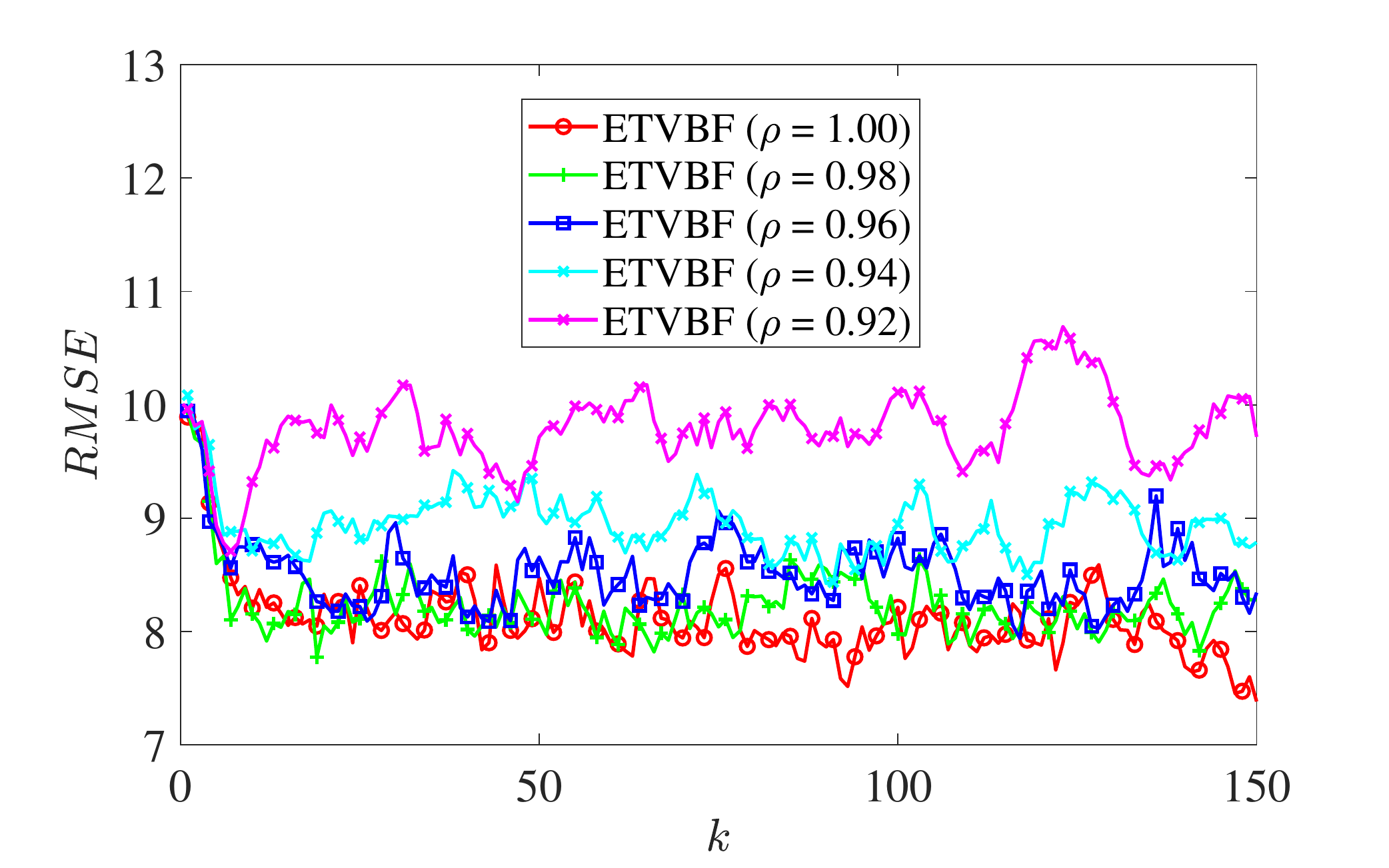}}
\caption{The RMSE of ETVBF when $\rho = 0.92, 0.94, 0.96, 0.98, 1.00$.}
\label{fig rho0.9to1}
\end{figure}

\vspace{4pt}
Finally, under  different  forgetting  factors $\rho$, the estimation performances of \textbf{ETVBF} are compared. Fig. \ref{fig rho0.9to1} shows the RMSE of \textbf{ETVBF} under the forgetting factors $\rho = 0.92, 0.94, 0.96, 0.98, 1.00$. When $\rho<0.94$, the RMSE of \textbf{ETVBF} shows a significant decline. Hence,  it is suggested that  the forgetting  factor is selected as $\rho \in [0.94, ~ 1]$.

\vspace{4pt}
In summary, \textbf{ETVBF} possesses excellent and robust performances.

\section{Conclusion}
In this paper, an event-triggered variational Bayesian filter is proposed for systems with unknown and time-varying noise covariances. The state vector, the predicted error covariance, and the unknown measurement noise covariance are jointly estimated. Simulations show  excellent and robust performances of the proposed algorithm. The event-triggered  variational Bayesian filter in the nonlinear setting  will be considered in the future.

\bibliographystyle{IEEEtran}
\bibliography{ref}

\end{document}